\definecolor{wjs}{RGB}{0,0,255}
\definecolor{qz}{RGB}{0,255,0}
\newtheorem{theorem}{Theorem}
\newtheorem{proposition}{Proposition}
\newtheorem{lemma}{Lemma}
\crefname{assumption}{Assumption}{Assumptions}
\crefname{equation}{equation}{equations}
\theoremstyle{definition}
\newtheorem{definition}{Definition}
\newcommand\independent{\protect\mathpalette{\protect\independenT}{\perp}}
\def\independenT#1#2{\mathrel{\rlap{$#1#2$}\mkern2mu{#1#2}}}
\newcommand*\diff{\mathop{}\!\mathrm{d}}
\renewcommand{\algocf@captiontext}[2]{#1\algocf@typo. \AlCapFnt{}#2} % text of caption
\def\@algocf@capt@plain{top}
\renewcommand{\algocf@makecaption}[2]{%
  \addtolength{\hsize}{\algomargin}%
  \sbox\@tempboxa{\algocf@captiontext{#1}{#2}}%
  \ifdim\wd\@tempboxa >\hsize%     % if caption is longer than a line
  \hskip .5\algomargin%
  \parbox[t]{\hsize}{\algocf@captiontext{#1}{#2}}% then caption is not centered
  \else%
  \global\@minipagefalse%
  \hbox to\hsize{\box\@tempboxa}% else caption is centered
  \fi%
  \addtolength{\hsize}{-\algomargin}%
}
\begin{document}
%
%TCIMACRO{\TeXButton{Section}{\sectionfont{\bfseries\large\sffamily}}}%
%BeginExpansion
\sectionfont{\bfseries\large\sffamily}%
%EndExpansion
%

%TCIMACRO{\TeXButton{Subsection}{\subsectionfont{\bfseries\sffamily\normalsize
%}}}%
%BeginExpansion
\subsectionfont{\bfseries\sffamily\normalsize}%
%EndExpansion
%

%TCIMACRO{\TeXButton{noindent}{\noindent}}%
%BeginExpansion
\noindent
%EndExpansion%
%TCIMACRO{\TeXButton{title}{{\sffamily\bfseries\Large
%Cross-screening in observational studies that test many hypotheses}}}%
%BeginExpansion
{\sffamily\bfseries\Large
Multiple testing when many $p$-values are uniformly
    conservative, with application to testing qualitative interaction in
    educational interventions}%
%EndExpansion
%

%TCIMACRO{\TeXButton{noindent}{\noindent}}%
%BeginExpansion
\noindent
%EndExpansion
\textsf{Qingyuan Zhao\footnote{\textit{Address for correspondence:} Department of Statistics, The
Wharton School, University of Pennsylvania, Jon M. Huntsman Hall, 3730 Walnut
Street, Philadelphia, PA 19104-6340 USA. \ \textsf{E-mail:}
qyzhao@wharton.upenn.edu. \ 23 August 2017.}, Dylan S. Small, Weijie Su}%
%

%TCIMACRO{\TeXButton{noindent}{\noindent}}%
%BeginExpansion
\noindent
%EndExpansion
\textsf{University of Pennsylvania, Philadelphia}%

%TCIMACRO{\TeXButton{noindent}{\noindent}}%
%BeginExpansion
\noindent
%EndExpansion

\textsf{Abstract. \ In the evaluation of treatment effects, it is of
  major policy interest to know if the
  treatment is beneficial for some and
    harmful for others, a phenomenon known as qualitative
    interaction. We formulate this question as a multiple testing
    problem with many conservative null $p$-values, in which the
    classical multiple testing methods may lose power substantially. % This
    % can happen when the null hypotheses
    % are composite, for
    % instance in testing qualitative interaction in meta-analysis.
    We propose a simple technique---conditioning---to improve the
    power. A crucial assumption we need is uniform conservativeness,
    meaning for any conservative $p$-value $p$, the conditional distribution $(p/\tau)\,|\,p \le
    \tau$ is stochastically larger than the uniform distribution on
    $(0,1)$ for any $\tau$. We show this property holds for one-sided tests
    in a one-dimensional exponential family (e.g.\ testing for qualitative
    interaction) as well as testing
    $|\mu|\le\eta$ using a statistic $X \sim \mathrm{N}(\mu,1)$ (e.g.\
    testing for practical importance with threshold $\eta$). We
    propose an adaptive method to select the threshold $\tau$. Our
    theoretical and simulation results suggest the proposed tests
    gain significant power when many $p$-values are uniformly conservative and lose
    little power when no $p$-value is uniformly conservative. We apply
    our method to two educational intervention datasets.}%

%TCIMACRO{\TeXButton{noindent}{\noindent}}%
%BeginExpansion
\noindent
%EndExpansion
\textsf{Keywords: \ Global null; Meta-analysis; Multisite study; Selective inference;
    Treatment effect heterogeneity; Uniform conditional stochastic order.}

\newpage

  % \begin{aug}

  %   \author{\fnms{Qingyuan}  \snm{Zhao}\corref{}%\thanksref{t2}
  %     \ead[label=e1]{qyzhao@wharton.upenn.edu}},
  %   \author{\fnms{Dylan S.} \snm{Small}\ead[label=e2]{dsmall@wharton.upenn.edu}}
  %   \and
  %   \author{\fnms{Weijie}  \snm{Su} \ead[label=e3]{suw@wharton.upenn.edu}}

  %   % \thankstext{t2}{Footnote to the first author with the `thankstext' command.}

  %   \runauthor{Q. Zhao et al.}

  %   \affiliation{University of Pennsylvania}

  %   \address{400 Huntsman Hall,\\ 3730 Walnut St,\\ Philadelphia, PA 19104,\\ USA\\
  %     \printead*{e1}}

  % \end{aug}

\section{Introduction}
\label{sec:introduction}

To make the most informed policy decisions from a randomized
experiment or an observational study, it is often important to
understand variation in treatment effects
\citep{wang2007statistics,schochet2014understanding}. For example, an
influential framework in education called Aptitude-Treatment
Interaction \citep{cronbach1977aptitudes} % claims that optimal learning is achieved when the
% instruction is exactly matched with the learner's aptitudes, because
is based on the claim that different learners may benefit from
different styles of instruction. There are also
numerous examples in medical studies. \citet{pizzocaro2001interferon}
found that a drug class called interferon increases the survival probability
of patients in the late stage of renal cell carcinoma, but is harmful
for patients in the early stage. In the evaluation of a randomized
trial \citep{crash2011importance}, the collaboration team found that
tranexacmic acid, the drug under study, reduces the risk of bleeding to death when
used within 3 hours from trauma injury, but increases the risk when
used after 3 hours.

Typically, the heterogeneous nature of treatment effect is examined by
subgroup analysis, where study participants are
grouped by some baseline covariates measured prior to the
treatment. We will use the terms ``treatment effect heterogeneity'' and
``treatment interaction'' interchangeably because both of them means there is a
noticeable interaction between the treatment and the subgroup
indicator when modeling the outcome. There are two types of treatment effect
heterogeneity: the \emph{qualitative} or \emph{disordinal}
interaction, where there exists one subgroup whom
the treatment benefits and another subgroup whom the treatment
harms, and the \emph{quantitative} or \emph{ordinal} interaction,
where the subgroup treatment effects may have different magnitude but
the same sign \citep{gail1985testing}. A good example
of qualitative interaction is the study of interferon described in the
last paragraph. There is also interest in understanding
treatment effects variation across sites in multisite studies or across
studies in meta-analysis \citep{bloom2017using}.

\subsection{Motivating applications}
\label{sec:motivating-examples}

In this paper we will consider the problem of testing qualitative
interaction. Since qualitative
interactions imply the optimal treatment rule must be personalized,
they usually have much more policy or clinical significance than
non-qualitative interactions. To motivate this, we first introduce two examples from
education. In the first example, we consider the effect of modified school
calendars. Instead of
following the more traditional school calendar with a long summer
break (in addition to a short winter and spring break), some schools
have switched to a modified school calendar comprising more frequent
but shorter intermittent breaks (e.g., 9 weeks of school followed by
3 weeks off), while keeping the total number of days at school
approximately the same. \citet{cooper2003effects} investigated the
effect of modified school calendars on student achievement using
studies of $55$ schools in $11$ districts. Using the published dataset
in \citet{konstantopoulos2011fixed}, we summarize their main results
in \Cref{fig:meta-analysis-1}.

In the second example, we consider the effectiveness of
writing-to-learn interventions, in which students receive
instruction with increased emphasis on writing tasks compared to
conventional instruction. The outcome of interest is the academic
achievement of the students (for example some exam score).
\Cref{fig:meta-analysis-2} summarizes the results of a meta-analysis
of $48$ studies by \citet{bangert2004effects}.

In both examples, the treatment has an
overall significantly positive effect when fitting a random effect
model (the ``RE Model'' row in \Cref{fig:meta-analysis}). However, for
the modified school calendar,  a multi-level analysis in
\citet{konstantopoulos2011fixed} showed significant heterogeneity in
the treatment effect; see also
\Cref{sec:meta-analysis-1}. Furthermore, in both examples there
is at least one significantly negative finding (without correcting
for multiple testing) in the forest plot. Therefore, it is
interesting to know if qualitative interaction exists, that is, if for
some cohorts the treatment effect is indeed negative.

\begin{figure}[t]
  \centering
  \begin{subfigure}[b]{0.49\textwidth}
    \includegraphics[width =
    \textwidth]{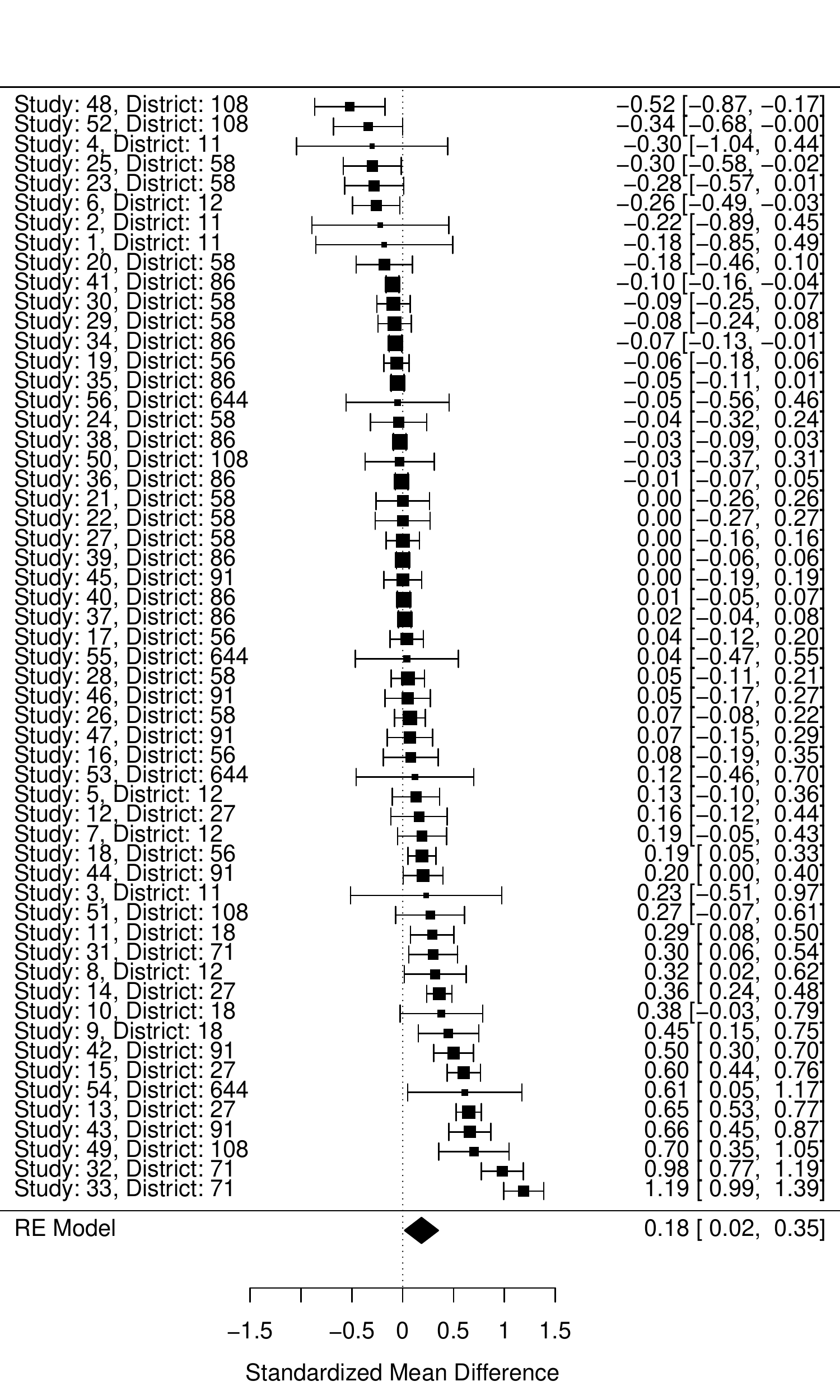}
    \caption{Example 1: effect of modified school calendar.}
    \label{fig:meta-analysis-1}
  \end{subfigure}
  \begin{subfigure}[b]{0.49\textwidth}
    \includegraphics[width =
    \textwidth]{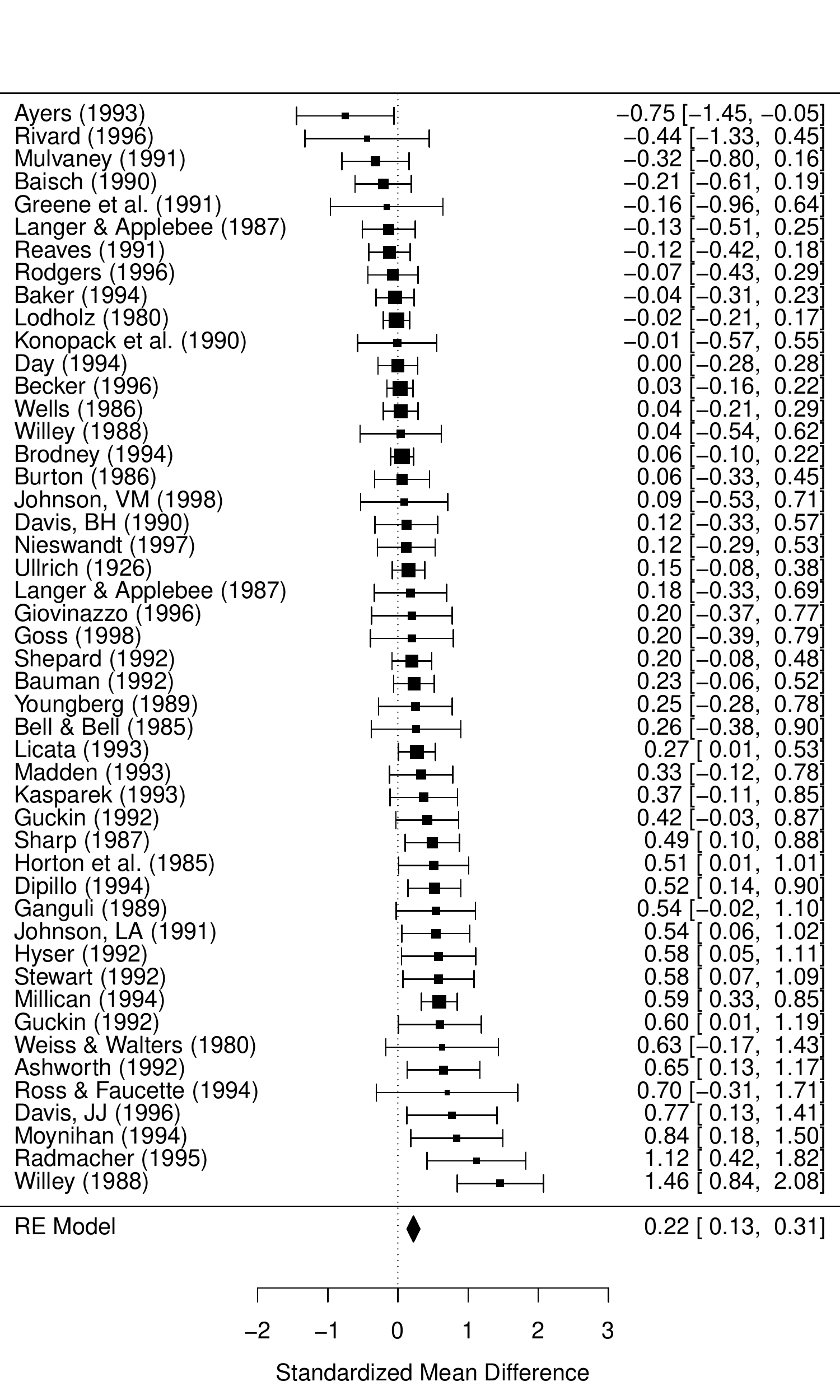}
    \caption{Example 2: effect of writing-to-learn intervention.}
    \label{fig:meta-analysis-2}
  \end{subfigure}
  \caption{Forest plots of two meta-analyses with potential
    qualitative interaction.}
  \label{fig:meta-analysis}
\end{figure}

\subsection{Formulation of testing qualitative interaction as a
  multiple testing problem}
\label{sec:form-test-qual}

Next we formulate this question as a multiple testing
problem. Suppose we want to analyze a randomized experiment or an
observational study in which there are $n$ subgroups or independent
studies and the observed treatment effect in study $i$ is distributed
as $\mathrm{N}(\mu_i,\sigma_i^2)$. Then the null hypothesis of no qualitative
interaction is the union of two hypotheses, $H_0 = H_0^{+} \cup
H_0^{-}$, where $H_0^{+} = \{\mu_j \ge 0,
\forall i\}$ and $H_0^{-} = \{\mu_i \le 0, \forall i\}$. The one-sided
hypothesis $H_0^{+}$ is the intersection of $n$ individual hypotheses,
$H_0^{+} = \cap_{i=1}^n H_{0i}^{+}$ where $H_{0i}^{+} = \{\mu_i \ge
0\}$, and is called a \emph{global
  null hypothesis} because $H_0^{+}$ is correct if and only if all the
individual hypotheses are correct. To test the null hypothesis $H_0$
at level $\alpha$, one can simply test both $H_0^{+}$ and $H_0^{-}$
and reject $H_0$ if both $H_0^{+}$ and $H_0^{-}$ are rejected at
level $\alpha$ (because $H_0$ is the \emph{union} hypothesis, we do not need
to correct for multiplicity here).

% and $H_{0j}$ is the null
% hypothesis that treatment benefits in study $j$. For example, we may
% assume  and $H_{0j}$ is $\mu_j \ge 0$. We are interested
% in testing the global null hypothesis $H_0 = \cap_{j=1}^n H_{0j} =
% \{\mu_j \ge 0, \forall j\}$ that asserts all treatment effects are
% non-negative. To test for qualitative interaction, one can simply test
% the union of $H_0$ and the opposite direction $H_0^{\prime} = \{\mu_j \le 0,
% \forall j\}$ by testing each hypothesis at level $\alpha/2$.
% To test
% If
% $H_0$ is rejected using the data, we are further interested in knowing
% which individual hypotheses are likely false.

Our statistical methodology only requires $p$-values for the
individual hypotheses, $p_i$, $i=1,\dotsc,n$. In the classical
setting of a multiple testing problem, the $p$-values are
assumed to either follow the uniform distribution or be
stochastically larger than the uniform distribution over $(0,1)$ under
the null, or stochastically smaller than the uniform
distribution under the alternative. Many effective procedures have
been proposed and some have certain theoretical optimality
guarantees especially if the null $p$-values are exactly uniformly
distributed. Some distinguished examples of testing the global null
hypothesis include Bonferroni's correction, Fisher's
combination, and Tukey's higher criticism, which are reviewed in
\Cref{sec:previous-methods}. When the global null is rejected, usually
we are further interested in knowing which individual hypotheses
(namely $H_{0i}^{+}$ and $H_{0i}^{-}$) are false. Subsequent methods such as
\citet{holm1979simple}'s procedure or
\citet{benjamini1995controlling}'s procedure can be used to control
the family-wise error rate or the false discovery rate.

However, in the motivating examples above as well as many other applications,
it is common that the majority of the null $p$-values may be very
conservative (stochastically larger than the uniform
distribution). For example, for testing $H_{0i}^{+}$, if the
individual effect $\mu_i$ is positive, then the corresponding
$p$-value will be conservative. In both motivating examples, many of
the estimated individual effects are positive as shown in
\Cref{fig:meta-analysis}, and it seems plausible that many of the
effects are truly positive which would result in conservative
$p$-values for testing $H_0^{+}$. Ideally, if we
knew in priori that some studies have positive effects, we would like to
exclude them when testing $H_0^{+}$.

\subsection{Overview of our approach}
\label{sec:overview-paper}

In this paper we propose a simple
technique---conditioning---that can
be used in conjunction with all the existing multiple testing
procedures to improve power when many null $p$-values are
conservative. This avoids paying an unnecessary price of adjusting for
multiplicity for the conservative tests (see
\Cref{sec:probl-cons-tests} for some heuristics). The proposed method
can be applied to any global and simultaneous testing problem if the
following assumptions are satisfied: 1.\ the tests are
independent; 2.\ the $p$-values are \textit{uniformly valid}
(meaning the null $p$-values are still valid after conditioning).
When these two assumptions hold, we show, by
providing theoretical results and extensive simulation
studies, that the conditional tests reduce little power in the
classical non-conservative scenario and greatly increase power in the
conservative scenario (including testing qualitative
interaction). Detailed proofs of the claims in this paper can be found
in the appendix.

\subsection{Uniform validity and its applications}
\label{sec:unif-valid}

Among the two key assumptions, the first independence assumption can
be relaxed (\Cref{sec:valid-cond-test}), while we show the second
assumption (uniform validity) holds for one-sided tests in a family of
distributions that has monotone likelihood ratio (MLR). This includes one-sided tests in
any one-dimensional exponential family and hence includes the
motivating problem of testing qualitative interactions. The location family of
folded normal distribution also has MLR (see
\Cref{sec:fold-norm-distr}). This leads to another
application of the proposed conditional test---testing for practical
importance. In many problems, rejecting small
deviations from no effect is often practically inconsequential and
instead we would like to test whether there is a practically
meaningful difference from no effect. \citet{sun2012multiple}
formulated this problem as a two-sided normal means problem. Let $X_i
\sim \mathrm{N}(\mu_i,\sigma_i^2)$ with known $\sigma_i^2$. The $i$-th null
hypothesis is $H_{0i}: |\mu_i| \le \eta$ where $\eta$ is the practical
importance threshold. This can be viewed as a one-sided testing
problem in the folded normal distribution (if $X \sim
\mathrm{N}(\mu,\sigma^2)$, then $|X|$ is said to have a folded normal
distribution with parameters $\mu$ and
$\sigma^2$). Since the location family
of folded normal distributions (fixed $\sigma^2$) has MLR. Therefore,
our conditional test is also valid for the two-sided normal means
problem.

Although uniform validity hold for many hypothesis testing
problems as illustrated above, it does not hold in all
circumstances. We refer the reader to \Cref{sec:uniform-validity} for
an example and alternative approaches.

 % One important
% exception is sensitivity analysis
% in observational studies, where we are interested in inferring the
% causal effect of a treatment though the treatment is not randomly
% assigned. In observational studies, a commonly used method is to match
% the observations according to their observed covariates, hoping this
% can mimic a randomized experiment.
% Nevertheless, there is always the concern that other unmeasured
% variables can affect both the treatment and the outcome. A sensitivity
% analysis asks: what would the unmeasured variables have to be like in
% order to change the qualitative conclusion of an analysis that
% presumes there is no unmeasured bias?
% Many of sensitivity analysis methods (c.f.\ \citet[Chapter
% 4]{rosenbaum2002observational} and \citet{liu2013introduction})

% supplementary file
% \citep{zhao2017supplement}.

% The rest of the paper is organized as follows. In \Cref{sec:procedure}
% we first review some previous multiple testing methods and discuss
% their problems with conservative $p$-values. Then we propose the new
% conditional and scan conditional tests in
% \Cref{sec:proposed-procedures}. We investigate
% the theoretical properties of these procedures in
% \Cref{sec:theory}. The simulations in
% \Cref{sec:simulation} show that when many $p$-values
% are conservative, the proposed procedures are much more powerful than
% the existing methods. We revisit the meta-analysis examples in
% \Cref{sec:meta-analysis-1} and conclude the paper with some discussion in
% \Cref{sec:conclusion}.

\section{Multiple testing in presence of conservative tests}
\label{sec:procedure}

Suppose we have $n$ $p$-values, $p_i$, $i=1,\dotsc,n$ for
$n$ hypotheses, $H_{0i},\,i=1,\dotsc,n$. We assume that
every null $p$-value $p_i$ is individually \emph{valid}, meaning
$\mathrm{P}(p_i \le q) \le q$ for all $0 \le q \le 1$ if $H_{0i}$ is
true. Furthermore, we call $p_i$
\emph{exact} if $\mathrm{P}(p_i \le q) = q$ for all $0 \le q \le 1$,
and \emph{conservative} if it is valid and $\mathrm{P}(p_i \le q) < q$
for some $0 < q < 1$. Geometrically if we plot the cumulative
distribution function (CDF) of a $p$-value, an exact $p$-value has a
CDF that is exactly the 45 degree diagonal line (the CDF of the uniform
distribution) and a conservative $p$-value'CDF is below the 45
degree line in at least one point.

\subsection{Previous methods for testing the global null}
\label{sec:previous-methods}

We will start with testing the global null hypothesis $H_0 = \bigcap_{i=1}^n
H_{0i}$ that all the individual hypotheses are true before moving into
other objectives (controlling family-wise error rate or false
discovery rate). Let's first review some classical methods to test the
global null. Given a significance level $0 < \alpha < 1$, one
of the simplest and most widely used methods is Bonferroni's
correction, which rejects $H_0$ if any of the $p_i$ is less than
$\alpha / n$. When testing the ``needle in a haystack problem'' (only
one non-null), this method is asymptotically optimal
\citep{arias2011global}. \citet{simes1986improved}
proposed an improved Bonferroni
procedure that rejects $H_0$ if $p_{(i)} \le
(i/n)\alpha$ for some $1 \le i \le n$, where $p_{(1)} \le p_{(2)} \le
\dotsb \le p_{(n)}$ are the ordered $p$-values. Notice that
Bonferroni's and Simes' procedures control type I error
for testing $H_0$ even if the $p$-values are not independent.

When the $p$-values are independent, another commonly used method is
Fisher's combination probability test,
which combines the $p$-values into one test statistic $T = - 2
\sum_{i=1}^n \log p_i$. \citet{fisher1925statistical} showed that $T$
has a $\chi^2$ distribution with $2n$ degrees of freedom when all the
$p$-values are uniformly distributed. When some hypotheses are false
the corresponding $p_i$ tend to be small, so the test statistic will
be large. A similar method is the truncated product of
\citet{zaykin2002truncated}, whose combined test statistic is
$T^{\prime} = - 2 \sum_{i=1}^n (\log p_i) 1_{\{p_i \le \tau\}}$,
where
$\tau$ is a truncation threshold between $0$ and
$1$. % This test is generally more powerful than Fisher's test when
% there are many large $p$-values.

A third type of method compares the empirical distribution of
$p_1,\dotsc,p_n$ with the uniform distribution. An interesting
representative is Tukey's higher criticism or second-level
significance test, which examines if there is
an excessive number of significant tests (e.g.,\ tests with $p$-values
less than 0.05). \citet{donoho2004higher} considered a modified
statistic:
\[
  HC^{*} = \max_{0 < q \le \tau} \frac{n^{-1}\sum_{i=1}^n 1_{\{p_i \le q\}} - q}{\sqrt{q(1-q)/n}}.
\]
Here $n^{-1}\sum_{i=1}^n 1_{\{p_i \le q\}}$ and $q$ are the observed
and expected fractions of tests significant at level $q$, and
$\sqrt{q(1-q)/n}$ is the variance of the observed fraction under
$H_0$. \citet{donoho2004higher} showed that this test is very
effective at solving the ``sparse normal means'' problem in certain
asymptotic regime.

\subsection{Problem of conservative tests}
\label{sec:probl-cons-tests}

This paper considers the situation that the $p$-values are independent
but many of them are conservative. As an example, consider the
one-sided testing problem of normal means. Let $Y_i,\,i=1,\dotsc,n$ be
independent normal variables with mean $\mu_i$ and variance $1$. The
null hypothesis $H_{0i}$ is $\mu_i \le 0$, so the global null
hypothesis is $H_0:\,\mu_i \le 0,\,\forall i$. The individual
$p$-values are given by $p_i = 1 - \Phi(Y_i)$ where $\Phi$ is the CDF
of the standard normal distribution. The $p$-value $p_i$ has a uniform
distribution if $\mu_i = 0$ and $p_i$ is conservative if $\mu_i < 0$.

When some $p$-values are conservative, the classical methods in
\Cref{sec:previous-methods} usually lose power. Consider a
rather extreme example in which the $p$-values are generated from the
one-sided normal means problem described in the last paragraph with $n
= 100$, $\mu_1=\mu_2=3$ and $\mu_3=\dotsb=\mu_{100}= -10$. For
simplicity, let's say the observed statistics are just $Y_i = \mu_i$, so the
$p$-values are $(p_1,p_2,p_3\dotsc,p_{100}) \approx (0.001, 0.001, 1, \dotsc,
1)$. Although the first two $p$-values are highly significant, the
classical methods do not find the whole set of $p$-values providing
enough evidence to reject the global null hypothesis, because they
``over-correct'' for multiplicity. This is demonstrated in the first
row of \Cref{tab:hypothetical}.
In contrast, using the conditional $p$-values proposed in
\Cref{sec:conditional-tests}, the same tests all reject the global
null hypothesis.

\begin{table}[t]
  \centering
  \begin{tabular}{|l|rrr|}
    \hline
    & Bonferroni & Fisher & Truncated Product \\
    \hline
    Classical & $0.1$ & $1$ & $0.999$ \\
    Conditional ($\tau = 0.5$) & $0.004$ & $5.4 \times 10^{-5}$ & $4.72 \times 10^{-5}$ \\
    % Scan Conditional & $0.0031$ & $<10^{-4}$ & $<10^{-4}$\\
    \hline
  \end{tabular}
  \caption{Combined $p$-values in the
    hypothetical example in \Cref{sec:probl-cons-tests} that $(p_1,p_2,p_3\dotsc,p_{100}) = (0.001, 0.001, 1, \dotsc,
    1)$. % Classical methods are not able to reject the global null
    % hypothesis, while by using the conditional $p$-values they all
    % reject the global null.
    % all the proposed methods (second and third rows)
    % reject the global null hypothesis. The $p$-values for the scan
    % conditional tests are computed by comparing to the null distribution
    % based on $10000$ simulations.
  }
  \label{tab:hypothetical}
\end{table}

Intuitively, if we do observe $(p_1,p_2,p_3\dotsc,p_{100}) = (0.001, 0.001, 1, \dotsc,
1)$, the first thing to be noticed is there are exceptionally many
large $p$-values. This indicates many conservative tests. Naturally,
we would like to ``ignore'' these large
$p$-values and only use the two smaller ones, with which we can easily
reject the global null. However, we cannot simply remove the large
$p$-values because this would be data snooping and make the
subsequent inference invalid.

The truncated product method of \citet{zaykin2002truncated} attempts
to address this problem by only multiplying the $p$-values below
some threshold $\tau$. This improves Fisher's combination test when
some $p$-values are conservative. However, it does not completely
resolve the problem, because the null distribution is still computed
assuming all $p$-values are uniformly distributed even though we have
overwhelming evidence that many of them are conservative.

% \section{Proposed procedures}
% \label{sec:proposed-procedures}

\section{Conditional test}
\label{sec:conditional-tests}

\subsection{Testing the global null}
\label{sec:testing-global-null}

In light of the discussion above, we propose a simple conditional
test. Given independent $p$-values $p_1,p_2,\dotsc,p_n$ and a fixed
threshold parameter $0 < \tau \le 1$, let $\mathcal{S}_{\tau} = \{i,\,p_i \le
\tau\}$ be the indices where the $p$-values are less than $\tau$. When
$p_i$ is exact, it is uniformly distributed on $[0, \tau]$ given
$p_i \le \tau$. In other words, the conditional distribution $p_i/\tau
| S_{\tau} \overset{i.i.d.}{\sim} \mathrm{Unif}[0,1]$ if $p_i$ is exact.

Our proposal is to use
any of the global testing methods in \Cref{sec:previous-methods} on
the set of $p$-values $\{p_i/\tau,\,i \in \mathcal{S}_{\tau}
\}$ (we assume the combined $p$-value is $1$ if $\mathcal{S}_{\tau}$
is empty). Intuitively, this screens out the very large $p$-values in the
example in the last section. For example, the conditional Bonferroni test rejects the global null
$H_0$ if the cardinality of $\mathcal{S}_{\tau}$ is positive,
$|\mathcal{S}_{\tau}| > 0$, and
\begin{equation} \label{eq:p-cb}
  p^{\mathrm{CB}}(p_1,\dotsc,p_n;\tau) = \Big(\min_{1 \le i \le n} p_i/\tau\Big)
  \cdot |\mathcal{S}_{\tau}| \le \alpha.
\end{equation}
Notice that $\tau = 1$ reduces to the original Bonferroni
correction since $|S_1|=n$. The conditional Fisher test uses the
statistic
\[
  T^{\mathrm{CF}}(p_1,\dotsc,p_n;\tau) = -2 \sum_{i=1}^n [\log (p_i / \tau)] \cdot
  1_{\{p_i \le \tau\}}
\]
and compares it with the $\chi^2$ distribution with
$|\mathcal{S}_{\tau}|$ degrees of freedom (if $|\mathcal{S}_{\tau}| >
0$); denote the combined $p$-value by
$p^{CF}(p_1,\dotsc,p_n;\tau)$. Notice that this test
statistic is very similar to the truncated product of
\citet{zaykin2002truncated} except we also divide the $p$-values by
the threshold $\tau$. This allows us to work with $|\mathcal{S}_{\tau}|$
instead of $n$ many $p$-values. When many $p$-values are conservative,
$|\mathcal{S_{\tau}}|$ can be substantially smaller than $n\tau$, the
expected value of $\mathcal{|S_{\tau}|}$ when no $p$-value is
conservative. The simulation and real data examples in
Sections \ref{sec:simulation} and \ref{sec:meta-analysis-1} show the
conditional tests can
be much more powerful than the unconditional tests ($\tau = 1$).

However, the conditional tests are not valid without making further
assumptions. Heuristically, we need the transformed $p$-values
$\{p_i/\tau,\,i \in \mathcal{S}_{\tau}\}$ to be valid. Although
the transformed $p$-values are exact if the original $p$-values
are exact, the same conclusion does not in general hold for conservative
$p$-values. Next we introduce a stronger notion of conservativeness:
\begin{definition} \label{def:unif-cons}
  A valid $p$-value $p_i$ is called \emph{uniformly valid} if for all $0 <
  \tau < 1$ such that $\mathrm{P}(p_i \le \tau) > 0$, $p_i/\tau$ given
  $p_i \le \tau$ is valid. A $p$-value is called \emph{uniformly
    conservative} if it is conservative and uniformly valid.
\end{definition}
\begin{proposition} \label{prop:cond-test-valid}
  The conditional test with any fixed $0 < \tau \le 1$ (any global test on $\{p_i/\tau,i
  \in \mathcal{S}_{\tau}\}$) controls type I error at the nominal
  level if $p_1,p_2,\dotsc,p_n$ are independent and uniformly valid.
\end{proposition}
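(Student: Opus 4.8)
The plan is to condition on the random set $\mathcal{S}_{\tau}$ of retained indices so as to reduce the problem to an ordinary global test applied to a \emph{fixed} number of \emph{independent, valid} $p$-values, and then simply to quote the validity of that global test.

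First I would dispose of the trivial case $\tau=1$: there $\mathcal{S}_1=\{1,\dots,n\}$ and $p_i/\tau=p_i$, so the conditional test is literally the chosen global test applied to $p_1,\dots,p_n$, which are independent and, under $H_0$, valid; so there is nothing to prove, and we may assume $0<\tau<1$. Next I would record that under $H_0$ every $H_{0i}$ holds, hence each $p_i$ is valid and, by hypothesis, uniformly valid in the sense of \Cref{def:unif-cons}. Then, fixing the nominal level $\alpha$ and letting $R$ be the rejection event, I would use that the combined $p$-value is set to $1$ when $\mathcal{S}_\tau=\emptyset$ to get $R\subseteq\{\mathcal{S}_\tau\neq\emptyset\}$ and decompose
\[
  \mathrm{P}(R)=\sum_{A}\mathrm{P}(\mathcal{S}_\tau=A)\,\mathrm{P}(R\mid \mathcal{S}_\tau=A),
\]
the sum being over nonempty $A\subseteq\{1,\dots,n\}$ with $\mathrm{P}(\mathcal{S}_\tau=A)>0$; it then suffices to bound each conditional probability by $\alpha$.

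The core step is the conditioning computation. Writing $\{\mathcal{S}_\tau=A\}=\bigcap_{i\in A}\{p_i\le\tau\}\cap\bigcap_{i\notin A}\{p_i>\tau\}$ and using independence of $p_1,\dots,p_n$, the conditional law of $(p_i)_{i\in A}$ given $\{\mathcal{S}_\tau=A\}$ is the product over $i\in A$ of the law of $p_i$ given $\{p_i\le\tau\}$, with the coordinates outside $A$ irrelevant; and since $\mathrm{P}(\mathcal{S}_\tau=A)>0$ forces $\mathrm{P}(p_i\le\tau)>0$ for each $i\in A$, uniform validity applies coordinatewise. Thus, conditionally on $\{\mathcal{S}_\tau=A\}$, the rescaled $p$-values $\{p_i/\tau : i\in A\}$ are mutually independent and each stochastically no smaller than $\mathrm{Unif}(0,1)$. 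On $\{\mathcal{S}_\tau=A\}$ the conditional test is, by definition, the chosen global test run at level $\alpha$ on exactly these $|A|$ inputs, so I would finish by invoking that this test controls type I error on independent valid $p$-values---true for every method in \Cref{sec:previous-methods}, since a test calibrated for exact $p$-values only becomes more conservative when its inputs are stochastically larger---to conclude $\mathrm{P}(R\mid \mathcal{S}_\tau=A)\le\alpha$, and hence $\mathrm{P}(R)\le\alpha$.

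I expect the main obstacle to be conceptual rather than computational: both $\mathcal{S}_\tau$ and its cardinality $|\mathcal{S}_\tau|$ are data-dependent and correlated with the very $p$-values that survive the screening, so one cannot simply treat $\{p_i/\tau : i\in\mathcal{S}_\tau\}$ as a sample of fixed size. The conditioning-on-$\mathcal{S}_\tau$ argument is precisely what neutralizes this, and it is exactly where independence is used (to factor the conditional law over $i\in A$ and to decouple the retained coordinates from the discarded ones) and where uniform validity is used (to keep the rescaled survivors valid inputs to the global test). A secondary point worth stating carefully is the monotonicity claim that each underlying global test remains valid when fed $p$-values that are stochastically larger than uniform; this is standard but should be noted, e.g.\ verified directly for Bonferroni, Simes, Fisher's combination, the truncated product, and higher criticism.
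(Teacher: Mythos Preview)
Your proposal is correct and follows essentially the same approach as the paper: condition on the selection set $\mathcal{S}_\tau$, use independence and uniform validity to conclude that the rescaled survivors are independent valid $p$-values, apply the global test's validity conditionally, and then average over $\mathcal{S}_\tau$. Your version is simply more explicit than the paper's terse three-sentence argument---in particular you spell out the empty-set convention, the positivity of $\mathrm{P}(p_i\le\tau)$ needed to invoke \Cref{def:unif-cons}, and the monotonicity of the underlying global tests under stochastically larger inputs---but the skeleton is identical.
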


The proof of \Cref{prop:cond-test-valid} immediately follows from
\Cref{def:unif-cons} and the validity of the global test. Next we
examine which tests are uniformly valid/conservative. Let $F_i$ be
the CDF of $p_i$, $F_i(x) = \mathrm{P}(p_i \le x)$. By
\Cref{def:unif-cons}, the $p_i$ is
uniformly conservative if and only if
\[
F_i(\tau x) \le x F_i(\tau),~\forall 0 \le x, \tau \le 1.
\]
Geometrically, this means that the function $F_i(x)$ is always below
the segment from $(0,0=F_i(0))$ to $(\tau,F_i(\tau))$ if $0 \le x \le
\tau$. Therefore, a sufficient condition for uniform conservativeness
is convexity of the CDF, since by convexity,
\[
F_i(\tau x) = F_i((1 - x) \cdot 0 + x \cdot \tau) \le (1-x) F_i(0) + x
F_i(\tau) = x F_i(\tau).
\]
However, as pointed out by an anonymous reviewer, convexity is not
necessary for uniform conservativeness. The geometric interpretation of
uniform conservativeness and an example of nonconvex but uniformly
conservative CDF are illustrated in \Cref{fig:cdf}.

\begin{figure}[t]
  \centering
  \includegraphics[width = \textwidth]{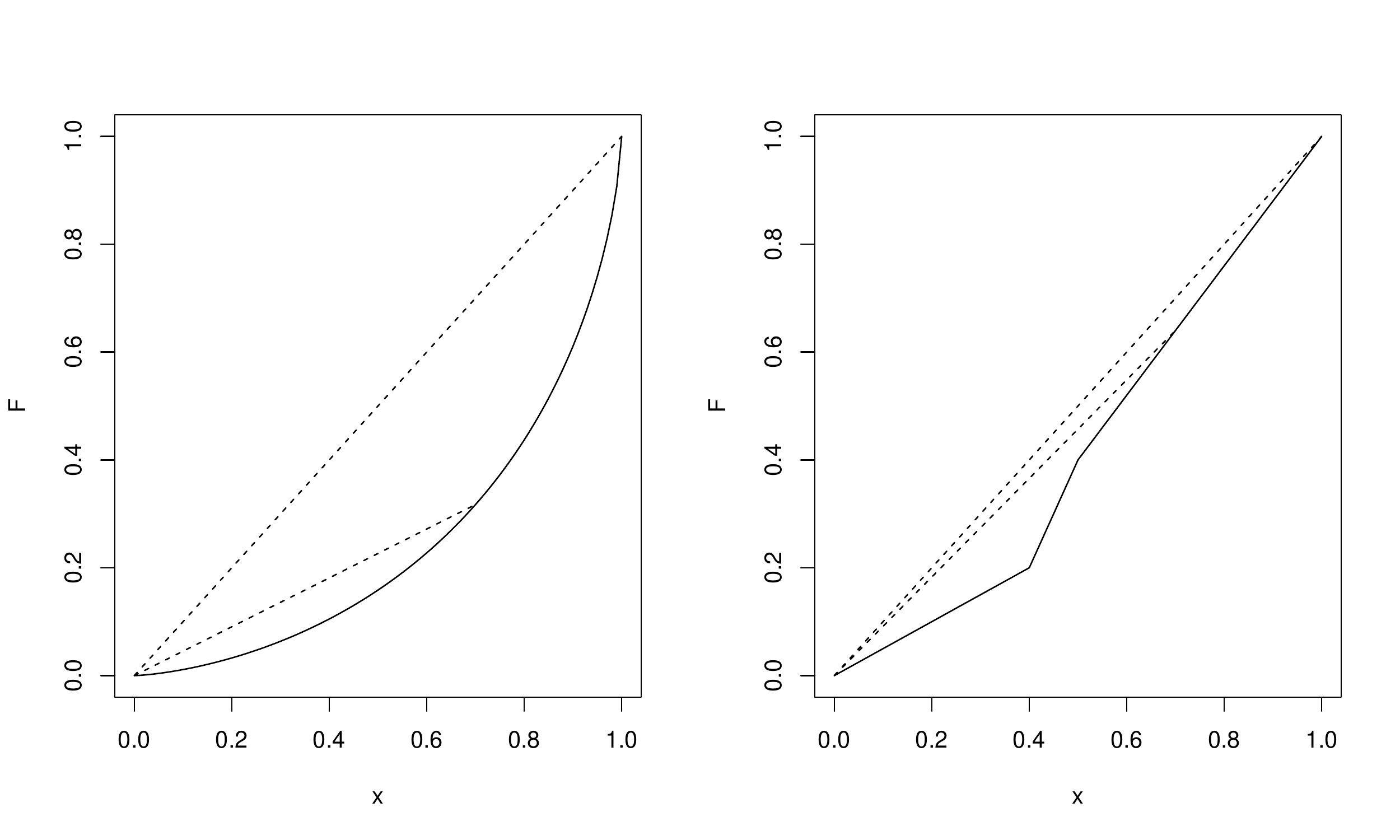}
  \caption{Two examples of uniformly conservative CDFs. The left plot
    is the distribution of $\Phi(Y)$ where $Y \sim
    \mathrm{N}(-1,1)$. The right plot corresponds to a piecewise
    constant density function: $f(x) = 0.5 \cdot I(0 \le x \le 0.4)
    + 2 \cdot I(0.4 < x \le 0.5) + 1.2 \cdot I(0.5 < x \le 1)$. Both
    CDFs satisfy the condition $F(x \tau) \le x F(\tau)$ for all $0 \le
    x,\tau \le 1$ so they are uniformly conservative. The geometric
    interpretation of this condition is illustrated by the two dashed lines corresponding to $\tau = 0.7$ and $1$. The right plot
    suggests that convexity of CDF is not necessary for uniform
    conservativeness.}
  \label{fig:cdf}
\end{figure}

When the CDF $F(x)$ is differentiable, convexity of $F(x)$ is
equivalent to the density $f(x)$ being monotonically increasing. This situation
arises when we are performing one-sided tests in a one-dimensional
exponential family, $\{g_{\theta}, -\infty < \theta <
\infty\}$. Suppose the sufficient statistic is $T$ (without loss of
generality we assume the mean of $T$ is increasing in $\theta$) and the null hypothesis is $H_0:\theta \le \theta_0$ vs.\ $H_1:\theta >
\theta_0$. The classical
Karlin-Rubin theorem states that the uniformly most powerful (UMP) test
rejects $H_0$ when $T$ is large, and a $p$-value can be computed using
the right-tail of $g_{\theta_0}$ (let $G_{\theta}$ be the CDF of
$g_{\theta}$) by $p = 1 - G_{\theta_0}(T)$. Therefore, if $T \sim g_{\theta}$,
\[
F(x) = \mathrm{P}(p \le x) = \mathrm{P}(T \ge G^{-1}_{\theta_0}(1 -
x)) = 1 - G_{\theta}(G^{-1}_{\theta_0}(1 -
x)).
\]
By the inverse function theorem, this implies that
\[
f(x) = \frac{g_{\theta}(G^{-1}_{\theta_0}(1-x))}{g_{\theta_0}(G^{-1}_{\theta_0}(1-x))}.
\]
It is well known that the exponential family has monotone
likelihood ratio (MLR), therefore $f(x)$ is increasing in $x$ if $\theta<
\theta_0$ (since $G^{-1}_{\theta_0}(1-x)$ is decreasing in $x$).

In summary, we have proved that
\begin{proposition} \label{prop:exp-unif-cons}
  When the true $\theta < \theta_0$, the UMP one-sided test of
  $H_0:\theta \le \theta_0$ vs.\ $H_1:\theta > \theta_0$ in the
  one-dimensional exponential family is uniformly conservative.
\end{proposition}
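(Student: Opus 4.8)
The plan is to prove the stronger statement that the CDF $F$ of the $p$-value $p = 1 - G_{\theta_0}(T)$ is convex on $[0,1]$, from which uniform conservativeness follows at once. Indeed, a convex $F$ with $F(0)=0$ and $F(1)=1$ satisfies $F(x) = F\big((1-x)\cdot 0 + x\cdot 1\big) \le x$, so $p$ is valid; and the computation displayed just before the proposition shows that convexity gives $F(\tau x) \le (1-x)F(0) + xF(\tau) = xF(\tau)$ for all $0\le x,\tau\le 1$, which is exactly uniform validity in the sense of \Cref{def:unif-cons}. So the work reduces to (i) convexity of $F$ and (ii) checking that $F$ is genuinely not the identity, so that $p$ is conservative rather than exact.

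For step (i) I would work, as in the paragraph preceding the statement, under the standing assumption that $G_{\theta_0}$ is continuous and strictly increasing on the support, so that $G_{\theta_0}^{-1}$ is well defined and the change of variables for the distribution of $p$ is legitimate. Then $F(x) = 1 - G_\theta\big(G_{\theta_0}^{-1}(1-x)\big)$, and differentiating via the inverse function theorem gives
\[
  f(x) = \frac{g_\theta\big(G_{\theta_0}^{-1}(1-x)\big)}{g_{\theta_0}\big(G_{\theta_0}^{-1}(1-x)\big)}.
\]
Convexity of $F$ is equivalent to $f$ being nondecreasing. Since $x \mapsto G_{\theta_0}^{-1}(1-x)$ is decreasing, it suffices to show the likelihood ratio $t \mapsto g_\theta(t)/g_{\theta_0}(t)$ is nonincreasing when $\theta < \theta_0$. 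This is precisely the monotone likelihood ratio property of a one-dimensional exponential family (with the convention that the mean of $T$ increases in $\theta$): for $\theta < \theta_0$ the ratio $g_{\theta_0}(t)/g_\theta(t)$ is nondecreasing in $t$, hence its reciprocal is nonincreasing, and composing with the decreasing map above makes $f$ nondecreasing. Thus $F$ is convex.

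For step (ii), conservativeness (a strict inequality somewhere) is where $\theta < \theta_0$ is used: because $\theta \ne \theta_0$, the laws $g_\theta$ and $g_{\theta_0}$ differ, so $F(x) = 1 - G_\theta(G_{\theta_0}^{-1}(1-x))$ is not identically equal to $x$ on $(0,1)$; together with $F(x)\le x$ from convexity this forces $F(x)<x$ for some $x$, i.e.\ $p$ is conservative. Equivalently, MLR implies $T$ is stochastically smaller under $\theta$ than under $\theta_0$, so $p$ is stochastically larger than, and not equal to, the uniform distribution.

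I expect the only real difficulty to be regularity bookkeeping rather than anything substantive: one must guarantee that $G_{\theta_0}$ is continuous and strictly increasing where $G_{\theta_0}^{-1}$ is evaluated, that $F$ is differentiable, and that $p$ has no atoms at $0$ or $1$. For a discrete exponential family these conveniences fail and the neat equivalence ``$F$ convex iff $f$ increasing'' is unavailable; there one would instead verify $F(\tau x)\le xF(\tau)$ directly, by applying the stochastic-ordering consequence of MLR to the conditional law of $T$ given $\{p\le\tau\}$. That is the single place where more care than the displayed computation is needed.
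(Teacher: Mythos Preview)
Your proposal is correct and follows essentially the same route as the paper: convexity of $F$ via the monotone likelihood ratio of the exponential family, with convexity then yielding uniform validity through $F(\tau x)\le x F(\tau)$. Your treatment is slightly more careful than the paper's in that you explicitly verify the strict-inequality part of ``conservative'' (step (ii)) and flag the regularity assumptions needed for the inverse-function-theorem computation; the paper leaves both implicit.
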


Our \Cref{prop:exp-unif-cons} can be viewed as a special case of
\citet[Theorem 1.1]{whitt1980uniform} who introduced a more general concept called uniform conditional stochastic order
(UCSO). When the sample space is totally ordered,
\citet{whitt1980uniform} showed that MLR implies UCSO (uniform
conservativeness). We refer the reader to \citet{whitt1980uniform} for
the more general result.

% A real random variable $X$ is said to
% be \emph{stochastically larger} than $Y$ if $\mathrm{P}(X > t) \ge
% \mathrm{P}(Y > t)$ for all $t$, with at least one strict inequality for some $t$. The relationship is UCSO with respect to a
% collection $\mathcal{A}$ of measurable subsets of $\mathbb{R}$, if the stochastic
% ordering is true for the conditional distributions of $X$ given $X
% \in A$ and $Y$ given $Y \in A$ for all $A \in \mathcal{A}$. In
% \Cref{def:unif-cons}, $X$ is the $p$-value $p_i$, $Y$ is a uniform
% random variable over $[0,1]$, and $\mathcal{A} = \mathcal{A}_{int} = \{[-\infty, \tau):0 < \tau
% \le 1\}$. Since we are comparing to the uniform
% distribution on $[0,1]$, MLR is equivalent to monotonically increasing
% densitity (i.e.\ convex CDF).

% \begin{lemma}\citep[Theorem 1.1]{whitt1980uniform}
% When the sample space is a subset of the real line and the probability
% measures have densities, the monotone
% likelihood ratio (MLR) property implies UCSO on $\mathcal{A}_{int}$.
% \end{lemma}

% , the
% condition in \Cref{prop:convex-cdf}.
% It is easy to show that UCSO on
% $\mathcal{A} = \{[-\infty, \tau):-\infty < \tau < \infty\}$
% (equivalently MLR) is preserved under a monotone transformation. See
% \citet[Theorem 2.1]{whitt1980uniform} for a more general result.

\subsection{Testing qualitative interaction}
\label{sec:power-test-qual}

The qualitative interaction problem formulated in
\Cref{sec:overview-paper} is a special case of one-sided testing in an
exponential family.  The $i$-th study ($1 \le i \le n$) provides an
effect estimate $X_i \sim \mathrm{N}(\mu_i, \sigma_i^2)$ with known
$\sigma_i$ and the null hypothesis of no qualitative interaction can be
separated into two global testing problems: all the means are
non-negative ($H_0^{+}$) and all the means are non-positive
($H_0^{-}$). Consider the first global null hypothesis $H_0^{+} =
\cap_{1 \le i \le n} H_{0i}^{+},\,H^{+}_{0i}:\mu_i \ge 0$. Since the
variance $\sigma_i^2$ is known, $H_{0i}^{+}$ is a one-sided problem in the
normal location family. By
\Cref{prop:cond-test-valid} and \Cref{prop:exp-unif-cons}, the conditional test
of $H_0^{+}$ is valid with any $0 < \tau \le 1$.

Since $H_0$ is the union of $H_0^{+}$ and $H_0^{-}$, we can reject $H_0$
at level $\alpha$ if both $H_0^{+}$ and $H_0^{-}$ are rejected at level
$\alpha$, because if $H_0^{+}$ is true,
\[
\mathrm{P}(H_0\mathrm{~is~rejected}) =
\mathrm{P}(H_0^{+}~\mathrm{and}~H_0^{-}\mathrm{~are~rejected}) \le
\mathrm{P}(H_0^{+}\mathrm{~is~rejected}) \le \alpha.
\]
Similarly, the type I error is also controlled if $H_0^{-}$ is true. This translates into the following testing procedure: for
$H_0^{+}$ and $H_0^{-}$, we can compute a combined $p$-value using
the global tests in \Cref{sec:procedure} (conditionally or
unconditionally). Then we report a single $p$-value for $H_0$
using the larger of the two combined $p$-values and reject $H_0$ if it
is less than the significance level $\alpha$.

In \Cref{sec:power-test-qual}, we will compare the performance of our
method with two existing tests of qualitative interaction that are
widely used in practice. The first method is the likelihood ratio test
(LRT) of \citet{gail1985testing}. The second method is the interval based
graphical approach (IBGA) of \citet{pan1997test}, which is equivalent
to the procedure described above using \citet{vsidak1967rectangular}'s
correction as the global test (applied to the unconditional $p$-values).

% \subsection{Scan conditional tests}
% \label{sec:scan-cond-tests}

% One disadvantage of the conditional test is that the
% combined $p$-value is not a continuous function of the individual
% $p$-values. The choice of the threshold $\tau$ may seem
% arbitrary and can change the combined $p$-value dramatically. To address this problem, we
% propose a scan conditional test. Given a conditional test (e.g.\ the
% conditional Bonferroni test), its scan version uses the test statistic
% \[
%   T^{*}(p_1,\dotsc,p_n) = \min_{\tau_0 < \tau < 1} p(p_1,\dotsc,p_n;\tau),
% \]
% where $p(p_1,\dotsc,p_n;\tau)$ is the $p$-value of the conditional
% combination test with threshold $\tau$. The value
% $p(p_1,\dotsc,p_n;\tau)$ is defined to be $1$ if $\tau$ is smaller
% than all $p_i$. \wjs{Is $p(p_1,\dotsc,p_n;\tau) = \min p_i \times |\mathcal S_{\tau}|/\tau$?}

% To obtain the distribution of $T^{*}(p_1,\dotsc,p_n)$ under the
% global null, notice that the adjusted $p$-value $p(p_1,\dotsc,p_n;\tau)$
% for any global test introduced in Section \ref{sec:previous-methods} is
% non-decreasing in the individual $p$-values. Therefore,
% $p(p^{\prime}_1,\dotsc,p_n;\tau) \succeq p(p_1,\dotsc,p_n;\tau)$ if
% $p^{\prime}_1 \succeq p_1$. Therefore, under the null, the null distribution of
% $T^{*}(p_1,\dotsc,p_n)$ is dominated by the distribution of
% $T^{*}(p_1,\dotsc,p_n)$ when all the $p$-values are exact, which can
% be approximated by Monte-Carlo simulations.

\subsection{Adaptively selecting the threshold $\tau$}
\label{sec:select-thresh-tau}

A remaining practical issue is how to choose the threshold $\tau$. Here we
provide an adaptive strategy that attempts to select $\tau$ without
sacrificing the validity of the test. Our strategy is
based on the following observation
\begin{proposition} \label{prop:stopping-time}
  If $\tau$ is a backward stopping time in the sense that $\{\tau \ge x\}
  \independent \{p_i,\,i\in\mathcal{S}_x\}$ for any $0 \le x \le 1$, then
  \Cref{prop:cond-test-valid} still holds.
\end{proposition}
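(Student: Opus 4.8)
The plan is to reduce the claim to the fixed‑threshold result \Cref{prop:cond-test-valid} by conditioning on the realized value of $\tau$. Assume first that $\tau$ takes values in a countable set $\mathcal{T}\subseteq(0,1]$ (this is the case used by the adaptive procedure). For $t\in\mathcal{T}$ let $R_t$ be the event that the conditional test with the \emph{fixed} threshold $t$ rejects, i.e.\ $|\mathcal{S}_t|>0$ and the chosen global test applied to $\{p_i/t:i\in\mathcal{S}_t\}$ rejects. Since the events $\{\tau=t\}$, $t\in\mathcal{T}$, partition the sample space,
\[
  \mathrm{P}(\text{adaptive test rejects})\;=\;\sum_{t\in\mathcal{T}}\mathrm{P}\bigl(\{\tau=t\}\cap R_t\bigr).
\]
The first key observation is that $R_t$ is measurable with respect to $\mathcal{H}_t:=\sigma\bigl(\mathcal{S}_t,(p_i)_{i\in\mathcal{S}_t}\bigr)$, since knowing which indices have $p_i\le t$ and their values determines $|\mathcal{S}_t|$ and the inputs $\{p_i/t\}$ to the global test. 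Hence it suffices to show $\{\tau=t\}$ is independent of $\mathcal{H}_t$: then $\mathrm{P}(\{\tau=t\}\cap R_t)=\mathrm{P}(\tau=t)\,\mathrm{P}(R_t)\le\alpha\,\mathrm{P}(\tau=t)$ by \Cref{prop:cond-test-valid} (the $p_i$ are independent and uniformly valid), and summing over $t$ gives the bound $\alpha$.

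It remains to prove $\{\tau=t\}\independent\mathcal{H}_t$. Note the monotonicity $\mathcal{H}_t\subseteq\mathcal{H}_x$ whenever $t\le x$, because $\mathcal{S}_t=\{i\in\mathcal{S}_x:p_i\le t\}$ and the corresponding values can be read off from $(\mathcal{S}_x,(p_i)_{i\in\mathcal{S}_x})$. Combined with the backward‑stopping‑time hypothesis $\{\tau\ge x\}\independent\{p_i,i\in\mathcal{S}_x\}$, this yields $\{\tau\ge x\}\independent\mathcal{H}_t$ for every $x\ge t$: for $x=t$ it is the hypothesis, and for $x>t$ it follows because $\mathcal{H}_t$ is a sub‑$\sigma$‑algebra of $\mathcal{H}_x=\sigma(\{p_i,i\in\mathcal{S}_x\})$. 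Now the family $\{\Omega\}\cup\{\{\tau\ge x\}:x\ge t\}$ is a $\pi$‑system, since these events are nested ($\{\tau\ge x_1\}\cap\{\tau\ge x_2\}=\{\tau\ge\max(x_1,x_2)\}$), and $\{\tau=t\}=\{\tau\ge t\}\cap\bigcap_{n}\{\tau\ge t+1/n\}^{c}$ lies in the $\sigma$‑algebra it generates. Since every member of this $\pi$‑system is independent of $\mathcal{H}_t$, Dynkin's $\pi$--$\lambda$ theorem gives $\sigma\bigl(\{\tau\ge x\}:x\ge t\bigr)\independent\mathcal{H}_t$, and in particular $\{\tau=t\}\independent\mathcal{H}_t$, as required.

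The main obstacle is the passage from a countable value set to an arbitrary $\tau\in(0,1]$. The event‑level independence $\{\tau=t\}\independent\mathcal{H}_t$ established above still holds, but when $\tau$ has a continuous component the decomposition over $\{\tau=t\}$ becomes vacuous and one instead needs a disintegration statement—that the conditional law of the $\mathcal{H}_t$‑measurable event $R_t$ given $\tau=t$ equals its unconditional law—so that $\mathrm{P}(\text{reject})=\int\mathrm{P}(R_t\mid\tau=t)\,d\mu_\tau(t)=\int\mathrm{P}(R_t)\,d\mu_\tau(t)\le\alpha$. This can be obtained from the same nested‑$\pi$‑system structure read along the reverse‑time filtration $\{\mathcal{H}_t\}$, or sidestepped by restricting $\tau$ to a fixed countable grid (which still defines a backward stopping time, and is what the adaptive proposal does). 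One small convention is also needed throughout: the collection ``$\{p_i,i\in\mathcal{S}_x\}$'' in the hypothesis must be read as the \emph{labelled} collection (equivalently $\mathcal{H}_x$), so that $R_x$ is $\mathcal{H}_x$‑measurable; otherwise the identity $R_t\in\mathcal{H}_t$ fails.
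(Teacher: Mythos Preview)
Your argument is correct and rests on the same idea as the paper: the rejection event depends only on $\{p_i/\tau : i \in \mathcal{S}_\tau\}$, which the backward-stopping-time property renders independent of how $\tau$ was chosen. The paper offers only a one-sentence heuristic to this effect (no formal proof appears in the appendix), so your treatment---conditioning on $\{\tau=t\}$, establishing $\{\tau=t\}\independent\mathcal{H}_t$ via the nested $\pi$-system $\{\{\tau\ge x\}:x\ge t\}$ and Dynkin's theorem, and flagging the disintegration issue for continuous $\tau$---is considerably more careful than what the paper provides, while remaining the same approach in spirit.
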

This Proposition is true because our conditional test is based on
$\{p_i/\tau,\,i\in\mathcal{S}_{\tau}\}$, which is independent of how $\tau$
is selected if $\tau$ is a backward stopping time.

\Cref{prop:stopping-time} suggests an interactive strategy to choose
$\tau$:
\begin{enumerate}
\item The data analyst chooses a sequence of decreasing cutoffs,
  $\tau_1$, $\tau_2$, $\tau_3$, \ldots, $\tau_K$ (for example, $0.9$, $0.85$,
  $0.8$, \ldots, $0.1$).
\item At step $k \ge 1$, the data analyst decides if she wants to
  continue based on $\{p_i,\,i\not\in\mathcal{S}_{\tau_k}\} =
  \{p_i,\,p_i > \tau_k\}$. Denote the $\tau_k$ she stops at as $\tau$.
\item Apply a global test on $\{p_i/\tau,\,i\in\mathcal{S}_{\tau}\}$.
\end{enumerate}

In principle, the starting cutoff $\tau_1$ should not be too close to
$1$, otherwise there is little information for the data analyst to
decide if she wants to move on. The ending cutoff $\tau_K$ should not
be too close to $0$, so not too many signals are excluded.

Finally, we describe when the data analyst may want to stop. Consider
the conditional Bonferroni test defined in \eqref{eq:p-cb}. Since the
minimum $p$-value does not depend on the threshold $\tau$ (unless
$\tau$ is very small so $\mathcal{S}_{\tau}$ is empty), it is
reasonable to devise an adaptive strategy
to minimize $|\mathcal{S}_{\tau}|/\tau$. Let $\mathcal{F}$ be the
distribution of the $p$-values: $F(x) = (1/n)\sum_{i=1}^n F_i(x)$,
then $|\mathcal{S}_{\tau}|/\tau \approx [nF(\tau)]/\tau$. Notice that
\[
\frac{\diff}{\diff \tau} \frac{F(\tau)}{\tau} = \frac{f(\tau)\tau -
  F(\tau)}{\tau^2}.
\]
Therefore, a sensible criterion is to stop at step $k$ if there is no
strong evidence that $f(\tau_k) \tau_k - F(\tau_k) > 0$. More
specifically, let $0 < w \le 1 - \tau_1$ be some prespecified window
size (for example, $0.1$). We can estimate $F$ and $f$ by
\[
\hat{F}(\tau) = \frac{|\mathcal{S}_{\tau}|}{n},~\hat{f}(\tau) =
\frac{|\{i,\,\tau \le p_i \le \tau + w\}|}{nw},
\]
and stop if we fail to reject $nw \hat{f}(\tau_k) \sim
\mathrm{Binomial}(n,qw)$ with $q < \hat{F}(\tau_k)/\tau_k$ at some prespecified
significance level (for example, $1\%$). We implement this heuristic
strategy in the numerical studies in
\Cref{sec:simulation,sec:meta-analysis-1} and find it generally
improves the power of the conditional tests with fixed $\tau$. It also
works well with other global tests too.

\subsection{Beyond global testing}
\label{sec:beyond-glob-test}

So far we have focused on testing the global null hypothesis that all
the individual hypotheses are true. When the global null
is rejected, it is often interesting to know which individual
hypotheses are false. In this case, it is often desirable to control
some multiple testing criterion such as the family-wise error rate (FWER) and
the false discovery rate (FDR).

The conditional tests proposed above can be easily extended to general
multiple testing problems. In fact, the conditional tests are closely
related to the selective inference framework of \citet{fithian2014optimal} by viewing
$\mathcal{S}_{\tau}$ as model selection. \citet{fithian2014optimal}
argued that the statistical inference should be performed conditioning
on the selection event $\mathcal{S}_{\tau}$. See
\citet{benjamini2010simultaneous} for a discussion on the difference
between simultaneous and selective inference. Notice
that the conditional $p$-values $\{p_i/\tau,\,i\in\mathcal{S}_{\tau}\}$ can be viewed
as usual $p$-values. We can apply, for example, Hochberg's step-up
procedure to control the FWER, or
the Benjamini-Hochberg procedure to control the FDR.
In general, we expect the procedures using conditional p-values will
be more powerful than their unconditional versions when many tests are
conservative.

\section{Some theoretical results}
\label{sec:theory}
\newcommand{\Cov}{\operatorname{\textrm{Cov}}}
\newcommand{\Var}{\operatorname{\textrm{Var}}}
\newcommand{\goto}{\rightarrow}
\renewcommand{\P}{\operatorname{\mathbb{P}}}
\newcommand{\E}{\operatorname{\mathbb{E}}}
\newcommand{\e}{\mathrm{e}}

% To avoid any ambiguity, our results
% below are with respect to a sequence of multiple testing problems
% indexed by the number of tests $\{n_l\}_{l=1}^\infty$ where $n_l \goto
% \infty$. The subscript $l$ is often omitted when it is clear from the
% context.

\subsection{Power of the conditional test}
\label{sec:power-cond-test}

% Our first result shows that, when testing for the global null and
% there is non-negligible fraction of
% uniformly conservative $p$-values, the conditional Bonferroni test
% may gain power over the unconditional Bonferroni test in presence of.

% Let $n_0$ be the number of true null $p$-values who are either exact
% or uniformly conservative. Le $n_1 = n - n_0$ be
% the number of non-null $p$-values.
% , and $\tilde n_0 \le n_0$ be the
% number of uniformly conservative $p$-values whose CDF $F_i(\tau)$
% bounded away from $\tau$.
\begin{theorem}\label{thm:power}
Suppose all the $p$-values are independent and uniformly valid, and the
cutoff $0 < \tau < 1$ is a fixed constant. Let $F_i$ be the CDF of
the $i$-th $p$-value.%  and
% $c = 1 - \limsup_{n \to \infty} \frac{1}{n} \sum_{i=1}^n
% F_i(\tau)/\tau$.
Then the Bonferroni-adjusted conditional $p$-value
$p^{\mathrm{CB}} = (\min_{1 \le i \le n}
p_i/\tau) |\mathcal{S}_{\tau}|$ and the Bonferroni-adjusted unconditional $p$-value
$p^{\mathrm{B}} = n \min_{1 \le i \le n} p_i$ satisfy
\begin{equation} \label{eq:thm-bonf}
\liminf_{n \to \infty} \frac{1}{\tau n} \sum_{i=1}^n
F_i(\tau) + o_p(1) \le \liminf_{n \to \infty}
\frac{p^{\mathrm{CB}}}{p^{\mathrm{B}}} \le \limsup_{n \to \infty} \frac{p^{\mathrm{CB}}}{p^{\mathrm{B}}} \le \limsup_{n \to \infty} \frac{1}{\tau n} \sum_{i=1}^n F_i(\tau) + o_p(1).
\end{equation}
Therefore, if the right hand side of \eqref{eq:thm-bonf} is less than
$1$, the conditional Bonferroni test is asymptotically more powerful
than the conventional Bonferroni test.
\end{theorem}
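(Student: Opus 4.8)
The plan is to reduce the entire claim to a weak law of large numbers, after observing that $p^{\mathrm{CB}}$ and $p^{\mathrm{B}}$ differ by a factor that is deterministic up to a vanishing boundary effect. The key identity is that, on the event $\{\mathcal{S}_\tau \ne \emptyset\}$, the common factor $\min_{1 \le i \le n} p_i$ cancels:
\[
\frac{p^{\mathrm{CB}}}{p^{\mathrm{B}}} = \frac{(\min_i p_i/\tau)\,|\mathcal{S}_\tau|}{n \min_i p_i} = \frac{|\mathcal{S}_\tau|}{n\tau}.
\]
On the complementary event $\{\mathcal{S}_\tau = \emptyset\} = \{\min_i p_i > \tau\}$ we have $p^{\mathrm{CB}} = 1$ by convention, $p^{\mathrm{B}} = n\min_i p_i > n\tau$, and $|\mathcal{S}_\tau| = 0$, so $p^{\mathrm{CB}}/p^{\mathrm{B}} \in (0, 1/(n\tau)]$ while $|\mathcal{S}_\tau|/(n\tau) = 0$. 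Hence $|\,p^{\mathrm{CB}}/p^{\mathrm{B}} - |\mathcal{S}_\tau|/(n\tau)\,| \le 1/(n\tau)$ surely. (Uniform validity plays no role in this step; it is needed only so that $p^{\mathrm{CB}}$ is a genuine $p$-value, via \Cref{prop:cond-test-valid}, making the comparison one between two level-$\alpha$ tests.)

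Next I would analyse $|\mathcal{S}_\tau| = \sum_{i=1}^n 1_{\{p_i \le \tau\}}$. By independence these indicators are independent Bernoulli variables with means $F_i(\tau)$, so $\mathrm{E}|\mathcal{S}_\tau| = \sum_i F_i(\tau)$ and, crucially, $\mathrm{Var}(|\mathcal{S}_\tau|/(n\tau)) = (n\tau)^{-2}\sum_i F_i(\tau)(1 - F_i(\tau)) \le 1/(4 n \tau^2) \to 0$, a bound uniform in the (possibly $n$-dependent) marginals $F_i$. Chebyshev's inequality therefore gives $|\mathcal{S}_\tau|/(n\tau) = (n\tau)^{-1}\sum_i F_i(\tau) + o_p(1)$, and combining with the first step, $p^{\mathrm{CB}}/p^{\mathrm{B}} = (n\tau)^{-1}\sum_{i=1}^n F_i(\tau) + o_p(1)$. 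Taking $\liminf_n$ and $\limsup_n$ of both sides yields \eqref{eq:thm-bonf}.

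Finally, for the power statement, suppose $\limsup_n (n\tau)^{-1}\sum_i F_i(\tau) < 1$. Then for all large $n$ the deterministic term is bounded above by some fixed $c < 1$, while the $o_p(1)$ remainder is below $1 - c$ with probability tending to $1$, so $\mathrm{P}(p^{\mathrm{CB}} \le p^{\mathrm{B}}) \to 1$. Consequently $\mathrm{P}(p^{\mathrm{CB}} \le \alpha) \ge \mathrm{P}(p^{\mathrm{B}} \le \alpha) - o(1)$: the conditional Bonferroni test rejects whenever the unconditional one does, up to a vanishing probability, which is the asserted asymptotic dominance (and one can note it is strict whenever $p^{\mathrm{B}}$ has non-vanishing probability of lying in $(\alpha, \alpha/c\,]$). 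I do not anticipate a serious obstacle here; the only point requiring care is the boundary event $\{\mathcal{S}_\tau = \emptyset\}$, which is handled above by absorbing it into the $1/(n\tau)$ slack rather than by assuming $\mathrm{P}(\mathcal{S}_\tau = \emptyset) \to 0$.
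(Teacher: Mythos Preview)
Your proposal is correct and follows essentially the same route as the paper: cancel the common factor $\min_i p_i$ so that $p^{\mathrm{CB}}/p^{\mathrm{B}} = |\mathcal{S}_\tau|/(n\tau)$, then apply Chebyshev with the crude bound $\mathrm{Var}(|\mathcal{S}_\tau|) \le n/4$ to conclude that this ratio equals its mean $(n\tau)^{-1}\sum_i F_i(\tau)$ plus $o_p(1)$. Your treatment is in fact slightly more careful than the paper's, which does not explicitly address the event $\{\mathcal{S}_\tau = \emptyset\}$ or spell out the final power implication, but the argument is the same.
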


Consider the case that the number of
non-null $p$-values is a vanishing
fraction of $n$ (the situation in which the Bonferroni test is
desirable; see, for example, \citet{arias2011global}). Recall that
a $p$-value $p_i$ is valid if $F_i(\tau) \le \tau$ for all $\tau$ and is
conservative if it is valid and the inequality $F_i(\tau) \le \tau$ is
strict for some $\tau$. Therefore, the right hand side of
\eqref{eq:thm-bonf} is always not
greater than $1$. Furthermore, when the non-nulls are sparse and the
fraction of nulls that are conservative at $\tau$, $|\{1 \le i \le
n:\,F_i(\tau) <   \tau\}|/n$, is non-negligible, the right hand side
of \eqref{eq:thm-bonf} is less than $1$ and thus the conditional
Bonferroni test is more powerful than the unconditional test (consider
the example in \Cref{sec:probl-cons-tests}).

% As another example, when
% all the nulls are exact and the
% non-nulls are possibly a non-vanishing fraction of $n$, \Cref{thm:power}
% suggests that the conditional Bonferroni test is also more powerful.

%  theorem \ref{thm:power} amounts to
% saying that true effects are sparse, only accounting for a vanishing
% fraction. The next condition, $\liminf \tilde n_0/n_0 > 0$, allows us
% to better incorporate the problem structure by excluding a significant
% fraction of true null $p$-values that are larger than $\tau$.

\subsection{Validity of the conditional test under dependence}
\label{sec:valid-cond-test}

Next, we show that the conditional test is asymptotically valid when
the test statistics exhibit exchangeable correlations.
% Next, we turn to state some assumptions in preparation for Theorem
% \ref{thm:equi-corr}, which is concerned with the validity of the
% conditional test applied to correlated statistics.
% Let $n_{\tau}$ denote the number of one-sided $p$-values below the
% cutoff $c$,  and $\hat n_{\tau} = n_{\tau}/\tau$ be an ``estimator''
% of $n$.
Suppose $(Y_1, \ldots, Y_n)$ follow the multivariate normal
distribution with zero mean, unit variance and equal covariance $\rho$. The
correlation $\rho$ can arbitrarily vary as $n$ increases, with the
exception that $\rho$ is bounded away from 1. For example,
$\rho$ can be any number no greater than $0.99$. Note that, although
$\rho$ can be negative, it obeys $\rho \ge -\frac1{n-1}$ in order to
keep the covariance of $Y_1, \ldots, Y_n$ positive
semidefinite. Finally, let $p_i = 1 - \Phi(Y_i)$ be the one-sided
$p$-value.

The next Theorem states that the conditional Bonferroni test still
controls the type I error asymptotically when the test statistics
$Y_i$ are not independent but equally correlated.

\begin{theorem}\label{thm:equi-corr}
In the above setting, we have
$
\mathrm{P}\left((\min_{1 \le i \le n}
p_i/\tau) |\mathcal{S}_{\tau}| \le \alpha \right) \le (1 + o(1)) \alpha.
$
\end{theorem}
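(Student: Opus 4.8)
The plan is to reduce the equicorrelated case to the independent case via the standard one-factor representation of exchangeable Gaussians. Write $Y_i = \sqrt{\rho}\, Z + \sqrt{1-\rho}\, \varepsilon_i$ where $Z, \varepsilon_1, \dots, \varepsilon_n$ are i.i.d.\ standard normal (valid for $\rho \ge 0$; the case $-\tfrac1{n-1} \le \rho < 0$ is handled separately, but since $\rho$ is bounded away from $1$ the negative range shrinks to $0$ and contributes negligibly, so I would focus on $0 \le \rho \le 0.99$). Conditionally on $Z = z$, the $Y_i$ are independent with common mean $\sqrt{\rho}\, z$ and variance $1-\rho$, so $p_i = 1 - \Phi(Y_i)$ are, conditionally on $Z=z$, i.i.d.\ with a common CDF $F_z(x) = \mathrm{P}(1 - \Phi(\sqrt\rho z + \sqrt{1-\rho}\,\varepsilon) \le x)$. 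The key point is that $Y_i \ge 0 \iff \mu_i \le 0$ under the global null is automatic here (all means are $\sqrt\rho z$, but we are testing $\mu_i \le 0$, i.e.\ the true mean is $0$); so under $H_0$ the unconditional $p_i$ are exactly uniform, but conditionally on $Z=z$ they are shifted — and crucially $F_z$ is still a legitimate CDF on $[0,1]$ to which the conditional Bonferroni analysis applies.

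The main steps are then as follows. First, condition on $Z=z$ and apply the logic already used for \Cref{prop:cond-test-valid}: given $Z = z$ the $p_i$ are i.i.d.\ from $F_z$, so $\{p_i/\tau : i \in \mathcal{S}_\tau\}$ together with $|\mathcal{S}_\tau|$ has a tractable joint law. Write $m = F_z(\tau) = \mathrm{P}(p_1 \le \tau \mid Z=z)$; then $|\mathcal{S}_\tau| \mid Z=z \sim \mathrm{Binomial}(n, m)$, and given $|\mathcal{S}_\tau| = k$ the selected values $p_i/\tau$ are i.i.d.\ from the distribution $F_z(\tau x)/F_z(\tau)$ on $[0,1]$. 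Hence
\[
\mathrm{P}\Big( \min_{i} p_i/\tau \cdot |\mathcal{S}_\tau| \le \alpha \;\Big|\; Z = z \Big)
= \sum_{k=1}^n \binom{n}{k} m^k (1-m)^{n-k} \Big[ 1 - \big(1 - \tfrac{F_z(\alpha\tau/k)}{F_z(\tau)}\big)^k \Big].
\]
Using $F_z(\alpha\tau/k) \le (\alpha/k) F_z(\tau)$ — which holds because $F_z$ is convex (it is the CDF of a $p$-value from a one-sided normal test with a nonpositive shift in the relevant direction, hence has increasing density, exactly as in \Cref{prop:exp-unif-cons}) — each bracketed term is at most $1 - (1 - \alpha/k)^k \le \alpha$ (for $k \ge 1$), so the conditional rejection probability is at most $\alpha \cdot \mathrm{P}(|\mathcal{S}_\tau| \ge 1 \mid Z=z) \le \alpha$. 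The subtlety: whether $F_z$ is convex depends on the sign of $\sqrt\rho z$; for $z$ such that the shift makes the test anticonservative ($z < 0$, so effective mean $\sqrt\rho z < 0$ gives $\mathrm{E}[Y_i] < 0$, which is the conservative direction — wait, need to recheck: $p_i$ conservative when true mean is negative, i.e.\ when $z<0$), $F_z$ is convex; for $z > 0$ the $p_i$ are stochastically small and $F_z$ need not be convex, so the per-$z$ bound can exceed $\alpha$.

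The second and harder step is therefore to integrate over $Z$ and show the overshoot for $z > 0$ contributes only $o(1)\alpha$. Here I would use the fact that $Z$ is standard normal and essentially a small perturbation: for $z$ in a bounded region the conditional probability is bounded (by $1$), and the set of $z$ for which the bound genuinely exceeds $\alpha$ is where $\sqrt\rho z$ is not too negative; but I claim that even for $z>0$, $\mathrm{P}(p^{\mathrm{CB}} \le \alpha \mid Z=z) \le \alpha(1 + g(z))$ for an explicit $g$ with $\mathrm{E}[g(Z) \mathbf{1}\{Z>0\}] = o(1)$ as $n \to \infty$ — using that $F_z(\tau x) \le F_z(\tau) x^{c(z)}$ for some exponent $c(z)$ depending on the local behavior of $F_z$ near $0$, which for the normal $p$-value degrades slowly. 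More concretely, near $0$ one has $F_z(y) \asymp y^{1/(1-\rho)} \exp(\sqrt{\rho/(1-\rho)}\, z \cdot \sqrt{2\log(1/y)} + \cdots)$ by a standard tail expansion of $1-\Phi$, so the ``convexity defect'' is governed by $z\sqrt{\log(1/y)}$ at scale $y \approx \alpha/n$; averaging $\exp(c z \sqrt{\log n})$ against the Gaussian density of $Z$ gives $\exp(c^2 \log n / 2 \cdot \text{something})$, which must be shown to stay $1 + o(1)$ after accounting for the $\binom{n}{k}$ and Binomial weights concentrating $|\mathcal{S}_\tau|$ near $nm$. This tail bookkeeping — showing the Gaussian average of the convexity defect does not blow up, using $\rho \le 0.99$ to keep $1-\rho$ bounded below — is the real obstacle; the rest is the routine conditioning argument from \Cref{prop:cond-test-valid}. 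An alternative, possibly cleaner, route for this step is to note $\min_i p_i = 1 - \Phi(\max_i Y_i)$ and use known first-order asymptotics for the maximum of an equicorrelated Gaussian vector (which behaves like $\sqrt{\rho}\, Z + \sqrt{(1-\rho) 2\log n}$), directly controlling $\mathrm{P}(\min_i p_i/\tau \cdot |\mathcal{S}_\tau| \le \alpha)$ by comparing $\max_i Y_i$ and $|\mathcal{S}_\tau| \approx n(1-\Phi(\tau\text{-quantile shift}))$ on the event $\{Z = z\}$ and then integrating; I would try both and keep whichever makes the $o(1)$ cleanest.
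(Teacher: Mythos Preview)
Your conditioning-on-$Z$ strategy has a concrete flaw in the ``easy'' half. You claim that for $z\le 0$ the conditional CDF $F_z$ is convex by appeal to \Cref{prop:exp-unif-cons}, but conditionally on $Z=z$ we have $Y_i\sim\mathrm{N}(\sqrt{\rho}\,z,\,1-\rho)$, which is \emph{not} in the $\mathrm{N}(\cdot,1)$ location family used to form $p_i=1-\Phi(Y_i)$. The likelihood ratio $\phi_{1-\rho}(y-\sqrt{\rho}z)/\phi(y)$ has log that is a downward parabola in $y$ (leading coefficient $-\rho/(2(1-\rho))$), so MLR fails; the conditional density $f_z(p)$ vanishes at both endpoints $p\to 0,1$ and is unimodal at $p=1-\Phi(z/\sqrt{\rho})$. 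Thus $F_z$ is convex only on $[0,\,1-\Phi(z/\sqrt{\rho})]$, and for $\tau$ above that point your inequality $F_z(\alpha\tau/k)\le(\alpha/k)F_z(\tau)$ is not guaranteed even when $z\le 0$. So the clean $\le\alpha$ conditional bound does not go through as written, and the ``hard'' step (integrating the convexity defect over $z$) is both larger than you anticipate and, as you acknowledge, not executed.

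The paper sidesteps the conditional-validity question entirely by splitting on the size of $\rho$ rather than on $Z$. If $\rho$ stays bounded below by some $\epsilon>0$ along a subsequence, then with the same representation $Y_i=\sqrt{\rho}W+\sqrt{1-\rho}X_i$ one has $p_{\min}=1-\Phi(\sqrt{1-\rho}\,X_{\max}+\sqrt{\rho}\,W)\gtrsim n^{-(1-\epsilon)+o_p(1)}$ while $|\mathcal{S}_\tau|/\tau$ is of exact order $n$ (times a random but strictly positive factor depending on $W$), so the product diverges and the rejection probability tends to $0$---much stronger than $\le(1+o(1))\alpha$. If instead $\rho\to 0$ (which also absorbs the negative-$\rho$ case you dismissed), then $\mathrm{Cov}(\mathbf{1}\{p_1\le\tau\},\mathbf{1}\{p_2\le\tau\})=O(\rho)$ gives $|\mathcal{S}_\tau|/\tau=(1+o_p(1))n$ by Chebyshev, after which a plain union bound yields $\le(1+o(1))\alpha$. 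Your ``alternative route'' via the asymptotics of $\max_i Y_i$ is exactly the first of these two lemmas; the missing idea in your proposal is that this divergence argument covers \emph{all} $z$ simultaneously when $\rho$ is bounded, and that a separate, elementary variance argument handles $\rho\to 0$.
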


\section{Simulation}
\label{sec:simulation}

\subsection{Power of the global test}
\label{sec:global-test}

To assess the performance of the proposed procedures in
\Cref{sec:conditional-tests}, we implement a simulation study with
$n = 100$ one-sided tests of normal means $H_{0j}:\mu_j \le 0$ vs.\
$H_{1j}:\mu_j > 0$ in the following five
settings ($\mu_{i:j}$ stands for the vector $(\mu_i,\dotsc,\mu_j)$):
\begin{enumerate}
\item All null: $\mu_{1:100} = 0$;
\item 1 strong 99 null: $\mu_1 = 4$, $\mu_{2:100} = 0$;
\item 1 strong 99 conservative: $\mu_1 = 4$, $\mu_{2:100} =
  -1$;
\item 20 weak 80 null: $\mu_{1:20} = 1$, $\mu_{21:100} = 0$;
\item 20 weak 80 conservative: $\mu_{1:20} = 1$, $\mu_{21:100} = -1$.
\end{enumerate}
The test statistics are generated by $Y_i \sim \mathrm{N}(\mu_i,1)$
and the $p$-values are computed by $p_i = 1 - \Phi(Y_i)$ where $\Phi$
is the CDF of standard normal. Then we apply four combination tests, Bonferroni, Fisher, Tukey and truncated product, in two forms---the original
unconditional test in \Cref{sec:previous-methods} and the conditional
($\tau = 0.5$). The null distribution of Tukey's test is
approximated by $10000$ samples from the global null.

\Cref{tab:simulation-power} reports the power of these tests in
$10000$ simulations when the significance level is $\alpha =
0.05$. When all the null hypotheses are true, all the tests have power
(a.k.a.\ type I error) about $5\%$. When there is only one strong
signal, Bonferroni performs the best, but notice that conditioning
does not make the power deteriorate substantially when no test is conservative (in
fact it improves the power of Fisher, Tukey and truncated product) and
substantially improves the power when many tests are conservative. The
same thing is true when we have many weak signals, except that
Fisher's combination and the truncated product perform the best in
this regime.

\begin{table}
  \centering
\begin{tabular}{|ll|rrr|}
\hline
Setting & Method & Uncond. & Cond. & Adaptive \\
\hline
1. All null & Bonferroni  & $\phantom{0}4.9$ & $\phantom{0}4.9$ & $\phantom{0}4.9$ \\
 & Fisher  & $\phantom{0}5.1$ & $\phantom{0}4.8$ & $\phantom{0}5.1$ \\
 & Tukey  & $\phantom{0}5.3$ & $\phantom{0}5.0$ & $\phantom{0}5.4$ \\
 & TruncatedP  & $\phantom{0}5.0$ & $\phantom{0}4.9$ & $\phantom{0}5.0$ \\
\hline
2.\ 1 strong 99 null & Bonferroni  & $\mathbf{78.0}$ & $\mathbf{78.0}$ & $\mathbf{78.0}$ \\
 & Fisher  & $25.9$ & $34.7$ & $27.2$ \\
 & Tukey  & $\phantom{0}7.0$ & $\phantom{0}6.9$ & $\phantom{0}7.5$ \\
 & TruncatedP  & $23.4$ & $31.2$ & $24.5$ \\
\hline
3.\ 1 strong 99 conservative & Bonferroni  & $\mathbf{76.2}$ & $\mathbf{85.1}$ & $\mathbf{88.7}$ \\
 & Fisher  & $\phantom{0}0.0$ & $20.3$ & $84.7$ \\
 & Tukey  & $\phantom{0}0.0$ & $\phantom{0}0.1$ & $\phantom{0}4.2$ \\
 & TruncatedP  & $\phantom{0}0.0$ & $21.0$ & $84.4$ \\
\hline
4.\ 20 weak 80 null & Bonferroni  & $22.8$ & $20.5$ & $22.3$ \\
 & Fisher  & $\mathbf{73.9}$ & $\mathbf{57.2}$ & $\mathbf{71.4}$ \\
 & Tukey  & $57.8$ & $40.4$ & $54.4$ \\
 & TruncatedP  & $70.2$ & $53.9$ & $67.2$ \\
\hline
5.\ 20 weak 80 conservative & Bonferroni  & $\mathbf{20.0}$ & $28.2$ & $28.1$ \\
 & Fisher  & $\phantom{0}0.0$ & $48.7$ & $\mathbf{52.3}$ \\
 & Tukey  & $\phantom{0}0.5$ & $25.9$ & $30.8$ \\
 & TruncatedP  & $\phantom{0}0.3$ & $\mathbf{51.0}$ & $\mathbf{52.9}$ \\
\hline
\end{tabular}
  \caption{Power (in \%) in the $5$ simulation settings in which the
    global tests (Bonferroni, Fisher, Tukey, and TruncatedP) are
    applied to the unconditional and conditional $p$-values. The
    truncation threshold $\tau$ is $0.5$ or chosen adaptively as
    described in \Cref{sec:select-thresh-tau}.}
  \label{tab:simulation-power}
\end{table}

% \begin{table}
%   \centering
%   \begin{tabular}{|l|cc|cc|cc|cc|cc|}
% \hline
% Setting & \multicolumn{2}{c|}{1} & \multicolumn{2}{c|}{2} &
%                                                           \multicolumn{2}{c|}{3} & \multicolumn{2}{c|}{4} & \multicolumn{2}{c|}{5} \\
%     \hline
% $p$-value & Un. & Con. & Un. & Con. & Un. & Con. & Un. & Con. & Un. & Con. \\
% \hline
% Bonferroni  & $\phantom{0}4.9$ & $\phantom{0}5.0$ & $77.3$ & $77.1$ & $75.9$ & $84.8$ & $22.6$ & $20.6$ & $19.8$ & $28.1$ \\
% Fisher  & $\phantom{0}5.0$ & $\phantom{0}4.8$ & $24.5$ & $33.3$ & $\phantom{0}0.0$ & $19.3$ & $74.0$ & $56.5$ & $\phantom{0}0.0$ & $49.4$ \\
% Tukey  & $\phantom{0}5.3$ & $\phantom{0}4.4$ & $\phantom{0}7.2$ & $\phantom{0}7.2$ & $\phantom{0}0.0$ & $\phantom{0}0.1$ & $59.0$ & $41.1$ & $\phantom{0}0.5$ & $28.2$ \\
% TruncatedP  & $\phantom{0}4.9$ & $\phantom{0}4.8$ & $21.5$ & $29.4$ & $\phantom{0}0.0$ & $19.9$ & $69.4$ & $52.9$ & $\phantom{0}0.2$ & $51.8$ \\
% \hline
% \end{tabular}
%   \caption{Power (in \%) in the $5$ simulation settings where the
%     global tests (Bonferroni, Fisher, Tukey, and TruncatedP) on the
%     unconditional (Un.) and conditional (Con.) $p$-values.}
%   \label{tab:simulation-power}
% \end{table}

\subsection{Power of signal detection}
\label{sec:bonf-corr-contr}

Next we investigate the empirical performance of four Bonferroni procedures that
control the family-wise error rate (FWER): the original Bonferroni
correction and the conditional Bonferroni test with $\tau = 0.5$ and
$\tau = 0.8$. % , and
% the step-down Bonferroni procedure in \Cref{sec:beyond-glob-test} with
% $\tau_0 = 0.5$
We simulate $n = 1000$ one-sided tests of normal means in
the following two settings:
\begin{enumerate}
\item No conservative: $\mu_{1:20} = 4$, $\mu_{21:1000} = 0$;
\item Conservative: $\mu_{1:20} = 4$, $\mu_{21:1000} = -1$.
\end{enumerate}
The test statistics are generated by $Y_i \sim \mathrm{N}(\mu_i,1)$
and the $p$-values are computed by $p_i = 1 - \Phi(Y_i)$.%  where $\Phi$
% is the CDF of standard normal.

\begin{table}
  \centering
  \begin{tabular}{|l|l|r|r|r|r|}
    \hline
    Setting & Method & 1st Quart. & Med. & Mean & 3rd Quart. \\
    \hline
    \multirow{3}{*}{No conservative} & Bonferroni & 9 & 11 & 10.91 & 12 \\
            & Cond.\ Bonf.\ ($\tau = 0.5$) & 9 & 11 & 10.87 & 12 \\
            & Cond.\ Bonf.\ ($\tau = 0.8$) & 9 & 11 & 10.90 & 12 \\
            & Cond.\ Bonf.\ (adaptive $\tau$) & 10 & 11 & 10.90 & 12 \\
    \hline
    \multirow{3}{*}{Conservative} & Bonferroni & 9 & 11 & 10.78 & 12 \\
            & Cond.\ Bonf.\ ($\tau = 0.5$) & 11 & 13 & 12.78 & 14 \\
            & Cond.\ Bonf.\ ($\tau = 0.8$) & 10 & 12 & 11.88 & 13 \\
            & Cond.\ Bonf.\ (adaptive $\tau$) & 12 & 13 & 13.12 & 15 \\
    \hline
  \end{tabular}
  \caption{Quartiles and means of the number of correct rejections in
    $1000$ simulations.}
  \label{tab:multiple}
\end{table}

\Cref{tab:multiple} compares the power of four different
Bonferroni procedures. The numbers in \Cref{tab:multiple} are summary
statistics of the number of correct rejections in $1000$
simulations. For example, when no test is conservative, the original
Bonferroni procedure rejects $10.91$ tests on average, while the conditional
Bonferroni procedures reject $10.87$, $10.90$ and $10.90$ tests on
average depending on which threshold $\tau$ is used. In contrast, in
the conservative scenario the conditional Bonferroni with
$\tau = 0.5$ rejects about $2$ more tests than the original
Bonferroni procedure on average. The conditional Bonferroni's method
with adaptively selected $\tau$ makes even more discoveries. In
conclusion, the proposed conditional tests are slightly less powerful
when no test is conservative, but are much more powerful when many
tests are conservative.

\subsection{Power of testing qualitative interaction}
\label{sec:power-test-qual}

Finally, we study the power of the tests of qualitative interaction
that are described in \Cref{sec:power-test-qual}. We simulate $n = 100$ normal
variables $Y_i \sim \mathrm{N}(\mu_i,1)$ in the following six settings:
\begin{enumerate}
\item 1 positive 99 null: $\mu_1 = 4$, $\mu_{2:100} = 0$.
\item 1 positive 1 negative: $\mu_1 = 4$, $\mu_2 = -4$, $\mu_{3:100} =
  0$;
\item 1 positive 99 negative: $\mu_1 = 4$, $\mu_{2:100} = -1$;
\item 20 positive 80 negative: $\mu_{1:20}=1$, $\mu_{21:100} = -1$;
\item 50 positive 50 negative: $\mu_{1:50}=1$, $\mu_{51:100} = -1$;
\item Gradual (1st setting): $\mu_{1:100}$ are equally spaced between
  $-1.5$ and $2$;
\item Gradual (2nd setting): $\mu_{1:100}$ are equally spaced between
  $-1.5$ and $4$.
\end{enumerate}

Apart from the first setting, the null hypothesis of no qualitative
interaction is false. \Cref{tab:simulation-power-qualint} compares the power of
the proposed tests with two existing methods, the likelihood ratio test (LRT) of
\citet{gail1985testing} and the interval based graphical approach
(IBGA) of \citet{pan1997test} as implemented in the \texttt{R} package
\texttt{QualInt} \citep{QuanInt}. The performance of IBGA is very
similar to the unconditional Bonferroni test, since IBGA is equivalent
to applying {\v{S}}id{\'a}k's correction to the original $p$-values in
our framework and it is well known that {\v{S}}id{\'a}k's correction
is only slightly more powerful than Bonferroni's correction.

\begin{table}
  \centering \small
\begin{tabular}{|ll|rrr|}
\hline
Setting & Method & Uncond. & Cond. & Adaptive \\
\hline
1 positive 99 null & Bonferroni  & $\phantom{0}3.6$ & $\phantom{0}3.6$ & $\phantom{0}3.6$ \\
 & Fisher  & $\phantom{0}0.1$ & $\phantom{0}1.7$ & $\phantom{0}1.0$ \\
 & Tukey  & $\phantom{0}0.0$ & $\phantom{0}0.3$ & $\phantom{0}0.3$ \\
 & TruncatedP  & $\phantom{0}0.5$ & $\phantom{0}1.6$ & $\phantom{0}1.4$ \\
 & IBGA  & $\phantom{0}3.7$ & $\phantom{0}3.7$ & $\phantom{0}3.7$ \\
 & LRT  & $\phantom{0}1.2$ & $\phantom{0}1.2$ & $\phantom{0}1.2$ \\
1 positive 1 negative & Bonferroni  & $59.9$ & $\mathbf{59.9}$ & $\mathbf{60.0}$ \\
 & Fisher  & $\phantom{0}1.0$ & $11.6$ & $\phantom{0}6.1$ \\
 & Tukey  & $\phantom{0}0.0$ & $\phantom{0}0.4$ & $\phantom{0}0.3$ \\
 & TruncatedP  & $\phantom{0}2.9$ & $\phantom{0}9.7$ & $\phantom{0}6.5$ \\
 & IBGA  & $\mathbf{60.4}$ &  &  \\
 & LRT  & $12.8$ &  &  \\
1 positive 99 negative & Bonferroni  & $50.9$ & $\mathbf{45.4}$ & $57.6$ \\
 & Fisher  & $\phantom{0}0.0$ & $19.6$ & $\mathbf{84.9}$ \\
 & Tukey  & $\phantom{0}0.0$ & $\phantom{0}0.2$ & $\phantom{0}4.2$ \\
 & TruncatedP  & $\phantom{0}0.0$ & $20.7$ & $\mathbf{84.5}$ \\
 & IBGA  & $\mathbf{51.9}$ &  &  \\
 & LRT  & $\phantom{0}0.0$ &  &  \\
20 positive 80 negative & Bonferroni  & $\mathbf{11.7}$ & $14.4$ & $16.7$ \\
 & Fisher  & $\phantom{0}0.0$ & $49.6$ & $\mathbf{51.8}$ \\
 & Tukey  & $\phantom{0}0.6$ & $27.4$ & $30.0$ \\
 & TruncatedP  & $\phantom{0}0.3$ & $\mathbf{51.4}$ & $\mathbf{52.5}$ \\
 & IBGA  & $\mathbf{12.1}$ &  &  \\
 & LRT  & $\phantom{0}3.0$ &  &  \\
50 positive 50 negative & Bonferroni  & $18.6$ & $18.7$ & $21.7$ \\
 & Fisher  & $71.5$ & $\mathbf{97.1}$ & $\mathbf{98.3}$ \\
 & Tukey  & $73.7$ & $90.7$ & $94.0$ \\
 & TruncatedP  & $92.5$ & $94.9$ & $\mathbf{98.1}$ \\
 & IBGA  & $19.5$ &  &  \\
 & LRT  & $\mathbf{93.8}$ &  &  \\
Gradual (from $-1.5$ to $2$) & Bonferroni  & $26.5$ & $28.0$ & $30.2$ \\
 & Fisher  & $18.3$ & $\mathbf{86.8}$ & $\mathbf{87.9}$ \\
 & Tukey  & $29.6$ & $71.2$ & $72.4$ \\
 & TruncatedP  & $53.7$ & $85.0$ & $\mathbf{88.2}$ \\
 & IBGA  & $27.4$ & &  \\
 & LRT  & $\mathbf{67.5}$ &  & \\
Gradual (from $-1.5$ to $4$) & Bonferroni  & $\mathbf{24.8}$ & $35.4$ & $36.4$ \\
 & Fisher  & $\phantom{0}0.0$ & $\mathbf{72.9}$ & $\mathbf{73.7}$ \\
 & Tukey  & $\phantom{0}1.0$ & $51.9$ & $48.9$ \\
 & TruncatedP  & $\phantom{0}1.1$ & $70.6$ & $\mathbf{73.6}$ \\
 & IBGA  & $\mathbf{25.3}$ &  &  \\
 & LRT  & $\phantom{0}7.7$ &  & \\
\hline
\end{tabular}
  \caption{Power (in \%) of testing qualitative interaction in the $7$
    simulation settings. The proposed methods---global tests
    (Bonferroni, Fisher, Tukey, and TruncatedP) applied to the
    unconditional and conditional $p$-values---are compared with two
    existing methods, the interval based graphical approach (IBGA)
    \citep{pan1997test} and the likelihood ratio test (LRT)
    \citep{gail1985testing}. For the conditional global tests,
    the truncation threshold $\tau$ is $0.5$ or chosen adaptively as
    described in \Cref{sec:select-thresh-tau}.}
  \label{tab:simulation-power-qualint}
\end{table}

Across
all settings and all global tests, conditioning (whether using $\tau =
0.5$ or $\tau$ adaptively chosen) improves the power of detecting
qualitative interaction. Apart from the second setting with only 1
positive and 1 negative signal where the Bonferroni/{\v{S}}id{\'a}k's
tests have the most power, in all the other settings the conditional
Fisher's test (and its variant, truncated product) with adaptively
chosen $\tau$ is always the most powerful method. In practice,
if there is qualitative interaction, it is rare that there is only one
subgroup with strong signal of the opposite sign (in other words,
the last four settings are more plausible than the second and third settings). Therefore, we expect the conditional Fisher's test with
adaptively chosen $\tau$ to perform the best in practice among the
tests considered in this paper.

\section{Applications to educational interventions}
\label{sec:meta-analysis-1}

\subsection{Random effects model only tests heterogeneity of treatment
effect}
\label{sec:random-effects-model}

In \Cref{sec:introduction} we introduced two motivating applications in
evaluating educational interventions. The standard practice to analyze
such datasets is the linear fixed/random/mixed effects model. For
example, for the modified school calendar intervention, a typical random
effects model is
\begin{equation} \label{eq:mixed-effect-model}
Y_i = \mu_i + \epsilon_i = (\mu + \alpha_{\!\mathsmaller{D_i}} + \beta_{\!\mathsmaller{S_i}}) + \epsilon_i,~i=1,\dotsc,n,
\end{equation}
where $Y_i$ is the observed average treatment effect in study $S_i$
nested in district $D_i$,
$\mu$ is the overall treatment effect, $\alpha_{\!\mathsmaller{D_i}} \sim
\mathrm{N}(0,\sigma_D^2)$ is the random district effect,
$\beta_{\!\mathsmaller{S_i}} \sim \mathrm{N}(0,\sigma_S^2) $ is the random study
effect (nested in the district), and $\epsilon_i \sim \mathrm{N}(0,
\sigma_i^2)$ is the noise with known sampling variance $\sigma_i^2$
because the per-study effect $Y_i$ is aggregated over many individuals.

To test heterogeneity of treatment effect, we use the function
\texttt{rma.mv} in the \texttt{R} package \texttt{metafor}
\citep{metafor} to fit a
multi-level random effects model using restricted maximum
likelihood. For $\sigma_D^2$, the point
estimate is $0.0651$ and the 95\% confidence interval is $(0.0222,
0.2072)$. For $\sigma_S^2$, the point estimate is $0.0327$ and the
95\% confidence interval is $(0.0163, 0.0628)$. Therefore, there is
strong evidence that the treatment effect of modified school calendar
varies across districts and schools. This is consistent with the
conclusions of \citet{konstantopoulos2011fixed}.

In general, the random effects model is not suitable for testing
qualitative interaction. If we take model
\eqref{eq:mixed-effect-model} and its random effects assumptions
literally, there is always a positive chance that some $\mu_i$ is
negative if $\sigma_D^2>0$ or $\sigma_S^2>0$. In other words, the
hypothesis of no qualitative interaction is automatically false in
a random effects model.

Alternatively, we may treat the district effects
$\alpha_{\mathrm{D}_i}$ as fixed and ask if any district experiences
a negative treatment effect. Outputs for this mixed effect model is
reported in \Cref{tab:modified-school-calendar}, where none of the
districts shows a significantly negative effect.

% latex table generated in R 3.3.2 by xtable 1.8-2 package
% Mon Apr 10 20:16:05 2017
\begin{table}[t]
\centering
\caption{Output table of a mixed effect model for the modified school
  calendar application. In this model, each district has a fixed effect
  but the schools have random effects.}
\label{tab:modified-school-calendar}
\begin{tabular}{|l|rrrr|}
  \hline
 & Estimate & Std.\ Err.\ & $z$-value & $p$-value \\
\hline
  District 11 & -0.129 & 0.181 & -0.71 & 0.476 \\
  District 12 & 0.063 & 0.064 & 0.98 & 0.325 \\
  District 18 & 0.347 & 0.083 & 4.18 & 0.000 \\
  District 27 & 0.486 & 0.040 & 12.01 & 0.000 \\
  District 56 & 0.041 & 0.042 & 0.98 & 0.329 \\
  District 58 & -0.042 & 0.033 & -1.30 & 0.194 \\
  District 71 & 0.879 & 0.064 & 13.75 & 0.000 \\
  District 86 & -0.029 & 0.015 & -1.86 & 0.063 \\
  District 91 & 0.250 & 0.044 & 5.68 & 0.000 \\
  District 108 & 0.015 & 0.079 & 0.19 & 0.853 \\
  District 644 & 0.157 & 0.137 & 1.14 & 0.253 \\
   \hline
\end{tabular}
\end{table}

\subsection{Applying the proposed tests for qualitative interaction}
\label{sec:prop-tests-qual}

We apply the tests for qualitative interaction described in
\Cref{sec:power-test-qual} to the two datasets. The results are
reported in \Cref{tab:meta-analysis}.

\begin{table}[t]
  \centering
  \caption{Combined $p$-values for qualitative interaction in the
    motivating applications in \Cref{sec:motivating-examples}. Three
    versions of the Bonferroni's test and Fisher's combination test
    are used: the unconditional test (Unc.), the conditional test with
    threshold $0.5$ and $0.8$, and the conditional
    test with adaptively selected threshold $\tau$.}
  \label{tab:meta-analysis}
  \begin{tabular}{|ll|rrrr|}
    \hline
    & & Unc. & $\tau = 0.5$ & $\tau = 0.8$ & $\tau$ adaptive \\
    \hline
    Modified calendar (school) & Bonferroni & 0.044 & 0.031 & 0.034 &
    0.033 \\
    & Fisher & 0.224
             & $<0.001$ & 0.004 & 0.003 \\
    & IBGA & 0.042 & & & \\
    & LRT & 0.011 & & & \\
    \hline
    Modified calendar (district) & Bonferroni & 0.347 & 0.189 & 0.158 & 0.245 \\
    & Fisher & 0.788 &
                       0.113 & 0.088 & 0.374 \\
    & IBGA & 0.274 & & & \\
    & LRT & 0.351 & & & \\
    \hline
    Writing-to-learn & Bonferroni & 0.830 & 0.381 & 0.519 & 0.503 \\
    & Fisher & 1 & 0.578 & 0.917 & 0.877 \\
    & IBGA & 0.556 & & & \\
    & LRT & 0.985 & & & \\
    \hline
  \end{tabular}
\end{table}

For the modified school calendar intervention, none of the individual
schools has a strong
enough effect after Bonferroni's correction to reject the hypothesis at
significance level $0.01$. Conditioning ($\tau = 0.5$ and $\tau =
0.8$) helps to make the Bonferroni
adjusted $p$-values smaller, but they are still greater than
$0.01$. In contrast, by combining the weak evidence from several
schools and by reducing the number of conservative $p$-values via
conditioning, the conditional Fisher's test with $\tau = 0.5$ gives a
$p$-value of $0.0002$. The $p$-value
is still significant when the truncation threshold is set to $\tau =
0.8$. Without conditioning, Fisher's combination test does not have
enough power to detect the qualitative interaction in this
application.

We can also test if the treatment effect has qualitative interaction
among the districts. Using the $z$-values in
\Cref{tab:modified-school-calendar}, we apply the same global tests
and obtained $6$ $p$-values in the second row of
\Cref{tab:meta-analysis}. None of them is significant at level $0.05$,
indicating insufficient evidence of qualitative interaction in the
district level.

For the writing-to-learn intervention, all the tests cannot reject the null
hypothesis of no qualitative interaction. In the forest plot
(\Cref{fig:meta-analysis-2}), Ayers (1993) study shows a
significantly negative effect, but our results suggest that it is
plausible that is due to random chance. % From a policy persepective, this suggests the

\section{Discussion}
\label{sec:conclusion}

% We have proposed two new techniques (conditioning and scanning) to
% improve the power of multiple hypothesis testing when many $p$-values
% are uniformly conservative. The theoretical results in
% \Cref{sec:theory} and the numerical results in Sections
% \ref{sec:simulation} show the gain of power can be
% substantial, especially if the majority of the $p$-values are very
% conservative.

\subsection{Multiparameter hypothesis testing}
\label{sec:mult-hypoth-test}

The global testing problem considered in this paper is also closely related
to the multiparameter hypothesis testing problem considered by
\citet{lehmann1952testing,berger1982multiparameter} that tests $H_0:
\theta \le 0$ for a multidimensional parameter $\theta$.
\citet{lehmann1952testing} showed that in general there is
no unbiased test for this problem, i.e.\ apart from the trivial test
that has constant power function, any valid test must have power less
than $\alpha$ at some alternative. In our paper, we assume there are
independent tests for each individual hypothesis $\theta_i \le 0$
and we restrict our attention to the alternative that many $\theta_i$s
are much smaller than $0$, so our results do not contradict the
conclusions in \citet{lehmann1952testing}. Another distinction is that we
allow the
dimension of $\theta$ to go to infinity, while in the classical
multiparameter setting the dimension of $\theta$ is fixed.

\subsection{Uniform validity/conservativeness}
\label{sec:uniform-validity}

As mentioned in \Cref{sec:unif-valid}, not all conservative tests are
uniformly conservative. One important exception we are aware of is the
sensitivity analysis of observational studies \citep[][Chapter
4]{rosenbaum2002observational} which places
bounds on the $p$-value for a specific magnitude of departure from
random treatment assignment. When there is no treatment effect,
the $p$-value under random treatment assignment is uniformly
distributed and not conservative, but the $p$-value bounds under
departure from randomization are inevitably very conservative because
many possible departures are considered.
Unfortunately, the $p$-value bounds are generally not uniformly
valid % For example, in Rosenbaum's sensitivity analysis of
% pair-matched observational study
\citep[see e.g.,][]{zhao2017sensitivity}. % , the null test statistic is
% distributed as $\mathrm{N}(1/2,\sigma^2/(4n))$ for some
% $\sigma^2$. The
% $p$-value upper bound is obatined by the tail probability of
% $\mathrm{N}(\Gamma/(1+\Gamma),\Gamma/(1+\Gamma)^2\sigma^2/n)$ where
% $\Gamma \ge 1$ is the sensitivity parameter characterizing the deviation
% from a randomized experiment. When $\Gamma > 1$, the $p$-value upper
% bound of a null hypothesis is obviously conservative, but it is not
% necessarily uniformly conservative because the variance of the two
% normal distributions are different.
When uniform conservativeness does
not hold, other methods (e.g.\
sample splitting in \citet{heller2009split}) must be used to reduce
the number of hypotheses. However, sample splitting loses some
efficiency because it discards some information in the
data whereas the conditional test proposed in this paper makes full
use of the information.

Another notable exception of uniform
validity is when the $p$-values are discrete. In this case,
the $p$-values cannot be strictly uniformly valid. However, for
tests with asymptotically normal approximations (such as the
bootstrap or Wilcoxon's rank sum test), we expect the conditional
tests in this paper are still asymptotically valid.

\appendix
\section{Proofs}
\label{sec:proofs}

\subsection{\Cref{prop:cond-test-valid}}
\begin{proof}
  Conditioning on the set $\mathcal{S}_{\tau}$, for any $i,j \in
  \mathcal{S}_{\tau}$, $p_i/\tau$ is a valid $p$-value and $p_i/\tau$
  is independent of $p_j/\tau$. Therefore, conditioning on the set
  $\mathcal{S}_{\tau}$, the global test on
  $\{p_i/\tau,i\in\mathcal{S}_{\tau}\}$ controls type I error at the
  nominal level ( on $\mathcal{S}_{\tau}$). By marginalizing over
  $\mathcal{S}_{\tau}$, the statement holds unconditionally as well.
\end{proof}

% \subsection{\Cref{prop:convex-cdf}}
% \label{sec:crefprop:convex-cdf}

% \begin{proof}
%   Let $p$ be the $p$-value.
%   Suppose the cumulative distribution function $F$ of $p$ is convex.
%   By definition of conditional probability, for any $0 \le x, \tau \le
%   1$, $\mathrm{P}(p_i/\tau \le x|p_i \le \tau) = F(\tau x) / F(\tau) =
%   F(x \cdot \tau + (1 - x) \cdot 0) / F(\tau) \le (x F(\tau) + (1 - x) F(0))/F(\tau) =
%   x$. Therefore $p$ is uniformly conservative. Conversely, if $p$ is
%   uniformly conservative, then the above inequality $F(x \cdot \tau + (1 - x) \cdot 0) /
%   F(\tau) \le (x F(\tau) + (1 - x) F(0))/F(\tau)$ holds for any $0 \le
%   x, \tau \le 1$. Since $F$ is non-decreasing, this implies the
%   epigraph of $F$ is convex.
% \end{proof}

\subsection{Folded normal distribution}
\label{sec:fold-norm-distr}

\begin{proposition} \label{prop:folded-normal}
  The family of folded normal distributions with standard deviation
  $\sigma = 1$ and varying $\mu$ has monotone likelihood
  ratio. More precisely, if $\mu_1 > \mu_2 \ge 0$, then
  \begin{equation} \label{eq:foldnorm-mlr}
    \frac{\partial}{\partial x} \frac{\phi(x - \mu_1) + \phi(x +
      \mu_1)}{\phi(x - \mu_2) + \phi(x + \mu_2)} > 0,~\forall x > 0.
  \end{equation}
\end{proposition}
\begin{proof}
  We will repeatedly use the fact $(\diff/\diff\,x) \phi(x) = - x
  \phi(x)$ in the proof. By evaluating the differentiation in
  \eqref{eq:foldnorm-mlr}, it suffices to prove
  \[
    g(\mu) = \frac{- (x - \mu) \phi(x - \mu) - (x + \mu) \phi(x +
      \mu)}{\phi(x - \mu) + \phi(x + \mu)}
  \]
  is an increasing function of $\mu \ge 0$. Taking the derivative of
  $g(\mu)$, we have
  \[
    \begin{split}
      &[\phi(x - \mu) + \phi(x + \mu)]^2 \cdot \frac{\diff}{\diff \mu}
      g(\mu) \\
      =& \big\{[-(x - \mu)^2 + 1] \phi(x - \mu) + [(x + \mu)^2 - 1] \phi(x +
      \mu)\big\} \big[\phi(x - \mu) + \phi(x + \mu)\big] \\
      &- \big[- (x - \mu) \phi(x - \mu) - (x + \mu) \phi(x +
      \mu)\big] \big[ (x - \mu) \phi(x - \mu) - (x + \mu) \phi(x +
      \mu) \big] \\
      =& \phi(x - \mu)^2 - \phi(x + \mu)^2 + 4 \mu x \cdot \phi(x - \mu)
      \phi(x + \mu) > 0.
    \end{split}
  \]
\end{proof}

%%%%%%%%%%%%%%%%

\subsection{\Cref{thm:power}}
\label{sec:crefthm:power}

\begin{proof}
% We start by upper bounding $\mathcal S_{\tau}/\tau$. Write $\mathcal N_0$ for the set of true null $p$-values and recall that $|\mathcal N_0| = n_0$. Then
% \begin{equation}\label{eq:s_tau_size}
% |\mathcal S_{\tau}| \le n_1 + |\mathcal S_{\tau} \cap \mathcal N_0|.
% \end{equation}
% For each $i \in \mathcal N_0$, it always holds that $\mathrm{P}(i \in \mathcal S_{\tau}) \le \tau$. By assumption, there are at least $\tilde n_0$ many $p$-values that further obey $\mathrm{P}(i \in \mathcal S_{\tau}) \le \tau - \epsilon$ for some constant $\epsilon > 0$. Together with \eqref{eq:s_tau_size}, it follows that
% \[
% \mathrm{E} |\mathcal S_{\tau}| \le n_1 + (n_0 - \tilde n_0) \tau + \tilde n_0 (\tau - \epsilon),
% \]
% from which we see
% \[
% \limsup \frac{\mathrm{E} |\mathcal S_{\tau}|}{\tau n} \le \limsup \frac{n_1 + (n_0 - \tilde n_0) \tau + \tilde n_0 (\tau - \epsilon)}{\tau n} < 1.
% \]
Denote
\[
\limsup_{n \to \infty} \frac{\mathrm{E} [|\mathcal S_{\tau}|]}{\tau n} =
\limsup_{n \to \infty} \frac{1}{n} \sum_{i=1}^n F_i(\tau)/\tau = c.
\]
% Write $\limsup \frac{n_1 + (n_0 - \tilde n_0) \tau + \tilde n_0
% (\tau - \epsilon)}{\tau n} = 1 - c$ for some $c > 0$.
By applying the Chebyshev inequality and making use of the above
display, we get, for any $\epsilon > 0$
\[
\begin{aligned}
\limsup \mathrm{P}\left(\frac{\mathcal |S_{\tau}|}{\tau} > (c + \epsilon)n
\right) &= \limsup \mathrm{P}\left(\frac{\mathcal |S_{\tau}|}{\tau} - \frac{\mathrm{E}[\mathcal |S_{\tau}|]}{\tau} > (c + \epsilon) n - \frac{\mathrm{E}[\mathcal |S_{\tau}|]}{\tau}\right)\\
&\le \limsup \frac{\mathrm{E} \left[\frac{\mathcal |S_{\tau}|}{\tau} -
    \frac{\mathrm{E}[\mathcal |S_{\tau}|]}{\tau} \right]^2}{ \left[(c + \epsilon)n - \frac{\mathrm{E}[\mathcal |S_{\tau}|]}{\tau} \right]^2}\\
&\le \limsup \frac{n/4}{\left[(\epsilon +o(1))n\right]^2}\\
& = 0.
\end{aligned}
\]
% \qz{I think we can only bound the variance of $\frac{\mathcal
%     |S_{\tau}|}{\tau}$ by $n/4$ because some tests are
%   conservative. The proof should still go through.}.
This implies that
\[
p^{\mathrm{CB}} = \frac{\mathcal |S_{\tau}|}{\tau} \cdot \min_{1 \le i \le n}
p_i \le (c + o_p(1))\,n \min_{1 \le i \le n} p_i = (c +
o_p(1)) p^{\mathrm{B}}.
\]
The other side of the inequality can be proven similarly.
% yielding $\mathrm{P}(p^{CB} \le \alpha) \ge \mathrm{P}(n \min_{1 \le i \le n} p_i \le \alpha)$.

\end{proof}

\subsection{\Cref{thm:equi-corr}}
\label{sec:crefthm:equi-corr}

\begin{lemma}\label{lm:large_corr}
Let $\epsilon > 0$ be an arbitrary constant. Then Theorem
\ref{thm:equi-corr} holds for any correlation sequence
$\{\rho_l\}_{l=1}^{\infty}$ such that $\rho_l \ge \epsilon$ for all
$l$.
\end{lemma}

\begin{lemma}\label{lm:small_corr}
Theorem \ref{thm:equi-corr} holds for any correlation sequence $\{\rho_l\}_{l=1}^{\infty}$ such that $\rho_l \goto 0$.
\end{lemma}

Taking these two lemmas as given for the moment, a proof of Theorem \ref{thm:equi-corr} is readily given below.
\begin{proof}[Proof of Theorem \ref{thm:equi-corr}]
Let $\hat n_{\tau} = |\mathcal{S}_{\tau}|/\tau$. Suppose on the
contrary that Theorem is false. Then, we can pick a subsequence
$\{\rho_{s_1}, \rho_{s_2}, \ldots\}$ such that, restricted to this
subsequence,
\begin{equation}\label{eq:vio_fwer}
\mathrm{P}(\hat n_{\tau} \cdot p_{\min} \le \alpha) > (1 + c) \alpha
\end{equation}
for some constant $c > 0$.

Note that the sequence $\{\rho_{s_1}, \rho_{s_2}, \ldots\}$ must further contain a subsequence with each element bounded below by 0 or a subsequence with vanishing elements. In the former case, Lemma \ref{lm:large_corr} contradicts with \eqref{eq:vio_fwer}, and in the latter case, a contradiction arises between Lemma \ref{lm:small_corr} and \eqref{eq:vio_fwer}. Hence, such subsequence $\rho_{s_1}, \rho_{s_2}, \ldots$ should not exist at all, leading to the correctness of this theorem.

\end{proof}

\begin{proof}[Proof of Lemma \ref{lm:large_corr}]
Recognizing that the equi-correlations $\rho$ are positive, we start with the following representation
\[
Y_i \overset{d}{=} \sqrt{1 - \rho}X_i + \sqrt{\rho}W,
\]
where $X_1, \ldots, X_n, W$ are iid $\mathcal{N}(0, 1)$. Write $X_{\max} = \max\{X_1, \ldots, X_n\}$. Then
\begin{equation}\label{eq:pmin}
p_{\min} = \Phi(-\sqrt{1-\rho} X_{\max} - \sqrt{\rho}W).
\end{equation}
Making use the fact that $\Phi(-x) = (1 + o(1))\varphi(x)/x$ for $x \goto \infty$, from \eqref{eq:pmin} we get
\begin{equation}\label{eq:pmin2}
p_{\min} = (1 + o_p(1)) \frac1{\sqrt{1-\rho} X_{\max} + \sqrt{\rho}W} \varphi(\sqrt{1-\rho} X_{\max} + \sqrt{\rho}W),
\end{equation}
where the term $o_p(1)$ results from recognizing $\sqrt{1-\rho} X_{\max} + \sqrt{\rho}W \goto \infty$ as $n \goto \infty$ in probability. We proceed to bound $\varphi(\sqrt{1-\rho} X_{\max} + \sqrt{\rho}W)$.  Note that
\[
\begin{aligned}
\varphi(\sqrt{1-\rho} X_{\max} + \sqrt{\rho}W) &= \frac1{\sqrt{2\pi}} \exp\left[ -(\sqrt{1-\rho} X_{\max} + \sqrt{\rho}W
)^2/2\right]\\
&= \frac1{\sqrt{2\pi}} \e^{-I_1 - I_2 - I_3},\\
\end{aligned}
\]
where $I_1=(1-\rho) X^2_{\max}/2, I_2 = \rho W^2/2, I_3 = \sqrt{\rho(1-\rho)} X_{\max} W$. Using $X_{\max} = (1 + o_{p}(1))\sqrt{2\log n}$, we see the first term $I_1$ obeys
\[
I_1 = (1-\rho) X^2_{\max}/2 = (1-\rho) (1 + o_p(1)) \left(\sqrt{2\log n}\right)^2/2 \le (1 - \epsilon + o_p(1)) \log n.
\]
The second term satisfies $I_2 = \rho W^2 /2 = O_p(1) = o_p(I_1)$, and the last terms obeys
\[
I_3 = \sqrt{\rho(1-\rho)} X_{\max} W = O_p(\sqrt{2\log n}) = o_p(I_1).
\]
Taking these results together yields $I_1 + I_2 + I_3 = (1 + o_p(1)) I_1 \le (1 - \epsilon + o_p(1)) \log n$. Hence, we obtain
\[
\varphi(\sqrt{1-\rho} X_{\max} + \sqrt{\rho}W) \ge \frac1{\sqrt{2\pi}} \e^{-(1 - \epsilon + o_p(1)) \log n} = \frac1{\sqrt{2\pi} n^{1 - \epsilon + o_p(1)}}.
\]
Plugging the inequality above into the right-hand side of \eqref{eq:pmin2} gives
\begin{equation}\label{eq:pmin_final}
\begin{aligned}
p_{\min} &= (1 + o_p(1)) \frac{\varphi(\sqrt{1-\rho} X_{\max} + \sqrt{\rho}W)}{\sqrt{1-\rho} X_{\max} + \sqrt{\rho}W} \\
& \ge (1+o_p(1)) \frac1{\sqrt{2\pi} n^{1 - \epsilon + o_p(1)}\left[ \sqrt{1-\rho} X_{\max} + \sqrt{\rho}W \right]}\\
& = (1+o_p(1)) \frac1{2n^{1 - \epsilon + o_p(1)}\sqrt{\pi(1-\rho)\log n} }.
\end{aligned}
\end{equation}

Next, we move on to consider $\hat n_c = n_c/c$. Each $p$-value $p_i  = \Phi(-\sqrt{1-\rho} X_i - \sqrt{\rho}W)$ is below the cutoff $c$ if and only if
\[
X_i \ge - \frac{\Phi^{-1}(c) + \sqrt{\rho} W}{\sqrt{1-\rho}},
\]
which asserts
\begin{equation}\label{eq:conver}
\frac{n_c}{n} = (1 + o_p(1))\Phi\left( \frac{\Phi^{-1}(c) + \sqrt{\rho} W}{\sqrt{1-\rho}}\right).
\end{equation}

Combing \eqref{eq:pmin_final} and \eqref{eq:conver} yields
\[
\begin{aligned}
\hat n_c \cdot p_{\min} &=  (1+o_p(1))\frac{n\Phi\left( \frac{\Phi^{-1}(c) + \sqrt{\rho} W}{\sqrt{1-\rho}}\right) }{c} p_{\min}\\
&\ge  (1+o_p(1))\frac{n\Phi\left( \frac{\Phi^{-1}(c) + \sqrt{\rho} W}{\sqrt{1-\rho}}\right) }{c} \cdot \frac1{2n^{1 - \epsilon + o_p(1)}\sqrt{\pi(1-\rho)\log n} }\\
&=  (1+o_p(1))\frac{\Phi\left( \frac{\Phi^{-1}(c) + \sqrt{\rho} W}{\sqrt{1-\rho}}\right) }{2c\sqrt{\pi(1-\rho)} } \cdot \frac{n^{\epsilon + o_p(1)}}{\sqrt{\log n}}.\\
\end{aligned}
\]
Observe that the first term
\[
\frac{\Phi\left( \frac{\Phi^{-1}(c) + \sqrt{\rho}
      W}{\sqrt{1-\rho}}\right) }{2c\sqrt{\pi(1-\rho)} }
\]
is a positive random variable bounded away from 0 with high probability (though it depends on $n$), whereas the second term $n^{\epsilon + o_p(1)}/\sqrt{\log n}$ diverges to $\infty$ as $n \goto \infty$. This immediately implies
\[
\mathrm{P}(\hat n_c \cdot p_{\min} \le \alpha) \goto 0,
\]
which is stronger than what the lemma claims.
\end{proof}

%%%%%%%%%%%%%%%%%%%
%%%%%%%%%%%%%%%%%%%
%%%%%%%%%%%%%%%%%%%
\begin{proof}[Proof of Lemma \ref{lm:small_corr}]
We start by proving the fact that $\hat n_c = (1 + o_p(1))n$. First, we note that
\begin{equation}\label{eq:trivial_mean}
\mathrm{E} [\hat n_c] = n.
\end{equation}
Next, its variance is given as
\[
\begin{aligned}
\Var(\hat n_c) &= \frac{\Var(\sum_{i=1}^n \bm{1}(p_i \le c))}{c^2}\\
& = \frac{n\Var(\bm{1}(p_1 \le c)) + n(n-1) \Cov(\bm{1}(p_1 \le c), \bm{1}(p_2 \le c))}{c^2}\\
& \le \frac{n/4 + n(n-1) \Cov(\bm{1}(p_1 \le c), \bm{1}(p_2 \le c))}{c^2}.
\end{aligned}
\]
To proceed, use the fact that $\Cov(\bm{1}(p_1 \le c), \bm{1}(p_2 \le c)) = O(\rho)$. Then we get
\begin{equation}\label{eq:var}
\sqrt{\Var(\hat n_c)} =\sqrt{ n^2 O\left(\frac1{n} + \rho\right)} = o(n),
\end{equation}
which together with \eqref{eq:trivial_mean} gives
\[
\hat n_c = (1 + o_p(1))n.
\]
Hence, we get
\[
\begin{aligned}
\mathrm{P}(\hat n_c \cdot p_{\min} \le \alpha) &\le  \sum_{i=1}^n \mathrm{P}(\hat n_c \cdot p_i \le \alpha) \\
&=\sum_{i=1}^n \mathrm{P}((1 + o_p(1))n \cdot p_i \le \alpha) \\
&= \sum_{i=1}^n (1 + o(1))\frac{\alpha}{n}\\
&= (1+o(1))\alpha,
\end{aligned}
\]
as desired.

\end{proof}

%%% Local Variables:
%%% mode: latex
%%% TeX-master: t
%%% End:

\bibliographystyle{plainnat}
\bibliography{ref}

\begin{thebibliography}{30}
\providecommand{\natexlab}[1]{#1}
\providecommand{\url}[1]{\texttt{#1}}
\expandafter\ifx\csname urlstyle\endcsname\relax
  \providecommand{\doi}[1]{doi: #1}\else
  \providecommand{\doi}{doi: \begingroup \urlstyle{rm}\Url}\fi

\bibitem[Arias-Castro et~al.(2011)Arias-Castro, Cand{\`e}s, and
  Plan]{arias2011global}
Ery Arias-Castro, Emmanuel~J Cand{\`e}s, and Yaniv Plan.
\newblock Global testing under sparse alternatives: {ANOVA}, multiple
  comparisons and the higher criticism.
\newblock \emph{Annals of Statistics}, 39\penalty0 (1):\penalty0 2533--2556,
  2011.

\bibitem[Bangert-Drowns et~al.(2004)Bangert-Drowns, Hurley, and
  Wilkinson]{bangert2004effects}
Robert~L Bangert-Drowns, Marlene~M Hurley, and Barbara Wilkinson.
\newblock The effects of school-based writing-to-learn interventions on
  academic achievement: A meta-analysis.
\newblock \emph{Review of Educational Research}, 74\penalty0 (1):\penalty0
  29--58, 2004.

\bibitem[Benjamini(2010)]{benjamini2010simultaneous}
Yoav Benjamini.
\newblock Simultaneous and selective inference: current successes and future
  challenges.
\newblock \emph{Biometrical Journal}, 52\penalty0 (6):\penalty0 708--721, 2010.

\bibitem[Benjamini and Hochberg(1995)]{benjamini1995controlling}
Yoav Benjamini and Yosef Hochberg.
\newblock Controlling the false discovery rate: a practical and powerful
  approach to multiple testing.
\newblock \emph{Journal of the Royal Statistical Society. Series B
  (Methodological)}, 57\penalty0 (1):\penalty0 289--300, 1995.

\bibitem[Berger(1982)]{berger1982multiparameter}
Roger~L Berger.
\newblock Multiparameter hypothesis testing and acceptance sampling.
\newblock \emph{Technometrics}, 24\penalty0 (4):\penalty0 295--300, 1982.

\bibitem[Bloom et~al.(2017)Bloom, Raudenbush, Weiss, and
  Porter]{bloom2017using}
Howard~S Bloom, Stephen~W Raudenbush, Michael~J Weiss, and Kristin Porter.
\newblock Using multisite experiments to study cross-site variation in
  treatment effects: A hybrid approach with fixed intercepts and a random
  treatment coefficient.
\newblock \emph{Journal of Research on Educational Effectiveness}, to appear,
  2017.

\bibitem[Cooper et~al.(2003)Cooper, Valentine, Charlton, and
  Melson]{cooper2003effects}
Harris Cooper, Jeffrey~C Valentine, Kelly Charlton, and April Melson.
\newblock The effects of modified school calendars on student achievement and
  on school and community attitudes.
\newblock \emph{Review of Educational Research}, 73\penalty0 (1):\penalty0
  1--52, 2003.

\bibitem[{CRASH-2-Collaborators}(2011)]{crash2011importance}
{CRASH-2-Collaborators}.
\newblock The importance of early treatment with tranexamic acid in bleeding
  trauma patients: an exploratory analysis of the {CRASH}-2 randomised
  controlled trial.
\newblock \emph{The Lancet}, 377\penalty0 (9771):\penalty0 1096--1101, 2011.

\bibitem[Cronbach and Snow(1977)]{cronbach1977aptitudes}
Lee~J Cronbach and Richard~E Snow.
\newblock \emph{Aptitudes and Instructional Methods: A Handbook for Research on
  Interactions}.
\newblock Irvington, 1977.

\bibitem[Donoho and Jin(2004)]{donoho2004higher}
David Donoho and Jiashun Jin.
\newblock Higher criticism for detecting sparse heterogeneous mixtures.
\newblock \emph{Annals of Statistics}, 32\penalty0 (3):\penalty0 962--994,
  2004.

\bibitem[Fisher(1925)]{fisher1925statistical}
Ronald~Aylmer Fisher.
\newblock \emph{Statistical Methods for Research Workers}.
\newblock Oliver and Boyd, 1925.

\bibitem[Fithian et~al.(2014)Fithian, Sun, and Taylor]{fithian2014optimal}
William Fithian, Dennis Sun, and Jonathan Taylor.
\newblock Optimal inference after model selection.
\newblock \emph{arXiv preprint arXiv:1410.2597}, 2014.

\bibitem[Gail and Simon(1985)]{gail1985testing}
M~Gail and R~Simon.
\newblock Testing for qualitative interactions between treatment effects and
  patient subsets.
\newblock \emph{Biometrics}, 41\penalty0 (2):\penalty0 361--372, 1985.

\bibitem[Heller et~al.(2009)Heller, Rosenbaum, and Small]{heller2009split}
Ruth Heller, Paul~R Rosenbaum, and Dylan~S Small.
\newblock Split samples and design sensitivity in observational studies.
\newblock \emph{Journal of the American Statistical Association}, 104\penalty0
  (487):\penalty0 1090--1101, 2009.

\bibitem[Holm(1979)]{holm1979simple}
Sture Holm.
\newblock A simple sequentially rejective multiple test procedure.
\newblock \emph{Scandinavian Journal of Statistics}, 6\penalty0 (2):\penalty0
  65--70, 1979.

\bibitem[Konstantopoulos(2011)]{konstantopoulos2011fixed}
Spyros Konstantopoulos.
\newblock Fixed effects and variance components estimation in three-level
  meta-analysis.
\newblock \emph{Research Synthesis Methods}, 2\penalty0 (1):\penalty0 61--76,
  2011.

\bibitem[Lehmann(1952)]{lehmann1952testing}
ErichL Lehmann.
\newblock Testing multiparameter hypotheses.
\newblock \emph{Annals of Mathematical Statistics}, 23\penalty0 (4):\penalty0
  541--552, 1952.

\bibitem[Pan and Wolfe(1997)]{pan1997test}
Guohua Pan and Douglas~A Wolfe.
\newblock Test for qualitative interaction of clinical significance.
\newblock \emph{Statistics in Medicine}, 16\penalty0 (14):\penalty0 1645--1652,
  1997.

\bibitem[Pizzocaro et~al.(2001)Pizzocaro, Piva, Colavita, Ferri, Artusi,
  Boracchi, Parmiani, and Marubini]{pizzocaro2001interferon}
Giorgio Pizzocaro, Luigi Piva, Maria Colavita, Sonia Ferri, Raffaella Artusi,
  Patrizia Boracchi, Giorgio Parmiani, and Ettore Marubini.
\newblock Interferon adjuvant to radical nephrectomy in robson stages ii and
  iii renal cell carcinoma: a multicentric randomized study.
\newblock \emph{Journal of Clinical Oncology}, 19\penalty0 (2):\penalty0
  425--431, 2001.

\bibitem[Rosenbaum(2002)]{rosenbaum2002observational}
Paul~R Rosenbaum.
\newblock \emph{Observational Studies}.
\newblock Springer, 2002.

\bibitem[Schochet et~al.(2014)Schochet, Puma, and
  Deke]{schochet2014understanding}
Peter~Z Schochet, Mike Puma, and John Deke.
\newblock Understanding variation in treatment effects in education impact
  evaluations: An overview of quantitative methods {(NCEE 2014-4017)}.
\newblock U.S. Department of Education, Institute of Education Sciences,
  National Center for Education Evaluation and Regional Assistance, 2014.

\bibitem[{\v{S}}id{\'a}k(1967)]{vsidak1967rectangular}
Zbyn{\v{e}}k {\v{S}}id{\'a}k.
\newblock Rectangular confidence regions for the means of multivariate normal
  distributions.
\newblock \emph{Journal of the American Statistical Association}, 62\penalty0
  (318):\penalty0 626--633, 1967.

\bibitem[Simes(1986)]{simes1986improved}
R~John Simes.
\newblock An improved bonferroni procedure for multiple tests of significance.
\newblock \emph{Biometrika}, 73\penalty0 (3):\penalty0 751--754, 1986.

\bibitem[Sun and McLain(2012)]{sun2012multiple}
Wenguang Sun and Alexander~C McLain.
\newblock Multiple testing of composite null hypotheses in heteroscedastic
  models.
\newblock \emph{Journal of the American Statistical Association}, 107\penalty0
  (498):\penalty0 673--687, 2012.

\bibitem[Viechtbauer(2010)]{metafor}
Wolfgang Viechtbauer.
\newblock Conducting meta-analyses in {R} with the {metafor} package.
\newblock \emph{Journal of Statistical Software}, 36\penalty0 (3):\penalty0
  1--48, 2010.

\bibitem[Wang et~al.(2007)Wang, Lagakos, Ware, Hunter, and
  Drazen]{wang2007statistics}
Rui Wang, Stephen~W Lagakos, James~H Ware, David~J Hunter, and Jeffrey~M
  Drazen.
\newblock Statistics in medicine—reporting of subgroup analyses in clinical
  trials.
\newblock \emph{New England Journal of Medicine}, 357\penalty0 (21):\penalty0
  2189--2194, 2007.

\bibitem[Whitt(1980)]{whitt1980uniform}
Ward Whitt.
\newblock Uniform conditional stochastic order.
\newblock \emph{Journal of Applied Probability}, 17:\penalty0 112--123, 1980.

\bibitem[Yu et~al.(2014)Yu, Suh, and Pan]{QuanInt}
Lixi Yu, Eun-Young Suh, and Guohua Pan.
\newblock \emph{QualInt: Test for Qualitative Interactions}, 2014.
\newblock R package version 1.0.0.

\bibitem[Zaykin et~al.(2002)Zaykin, Zhivotovsky, Westfall, and
  Weir]{zaykin2002truncated}
Dmitri~V Zaykin, Lev~A Zhivotovsky, Peter~H Westfall, and Bruce~S Weir.
\newblock Truncated product method for combining p-values.
\newblock \emph{Genetic Epidemiology}, 22\penalty0 (2):\penalty0 170--185,
  2002.

\bibitem[Zhao(2017)]{zhao2017sensitivity}
Qingyuan Zhao.
\newblock On sensitivity value of pair-matched observational study.
\newblock \emph{arXiv preprint arXiv:1702.03442}, 2017.

\end{thebibliography}

% \newpage
% \appendix
% \input{appendix}

\end{document}

% --- supplement: supplement.tex ---

\begin{frontmatter}

\title{Supplement to ``Multiple Testing When Many $p$-values
  are Uniformly Conservative, with application to Qualitative
  Interaction in Educational Intervention''%\protect\thanksref{T1}
}
\runtitle{Supplement}
% \thankstext{T1}{Footnote to the title with the `thankstext' command.}

\end{frontmatter}

\appendix
\section{Proofs}
\label{sec:proofs}

\subsection{\Cref{prop:cond-test-valid}}
\begin{proof}
  Conditioning on the set $\mathcal{S}_{\tau}$, for any $i,j \in
  \mathcal{S}_{\tau}$, $p_i/\tau$ is a valid $p$-value and $p_i/\tau$
  is independent of $p_j/\tau$. Therefore, conditioning on the set
  $\mathcal{S}_{\tau}$, the global test on
  $\{p_i/\tau,i\in\mathcal{S}_{\tau}\}$ controls type I error at the
  nominal level ( on $\mathcal{S}_{\tau}$). By marginalizing over
  $\mathcal{S}_{\tau}$, the statement holds unconditionally as well.
\end{proof}

% \subsection{\Cref{prop:convex-cdf}}
% \label{sec:crefprop:convex-cdf}

% \begin{proof}
%   Let $p$ be the $p$-value.
%   Suppose the cumulative distribution function $F$ of $p$ is convex.
%   By definition of conditional probability, for any $0 \le x, \tau \le
%   1$, $\mathrm{P}(p_i/\tau \le x|p_i \le \tau) = F(\tau x) / F(\tau) =
%   F(x \cdot \tau + (1 - x) \cdot 0) / F(\tau) \le (x F(\tau) + (1 - x) F(0))/F(\tau) =
%   x$. Therefore $p$ is uniformly conservative. Conversely, if $p$ is
%   uniformly conservative, then the above inequality $F(x \cdot \tau + (1 - x) \cdot 0) /
%   F(\tau) \le (x F(\tau) + (1 - x) F(0))/F(\tau)$ holds for any $0 \le
%   x, \tau \le 1$. Since $F$ is non-decreasing, this implies the
%   epigraph of $F$ is convex.
% \end{proof}

\subsection{Folded normal distribution}
\label{sec:fold-norm-distr}

\begin{proposition} \label{prop:folded-normal}
  The family of folded normal distributions with standard deviation
  $\sigma = 1$ and varying $\mu$ has monotone likelihood
  ratio. More precisely, if $\mu_1 > \mu_2 \ge 0$, then
  \begin{equation} \label{eq:foldnorm-mlr}
    \frac{\partial}{\partial x} \frac{\phi(x - \mu_1) + \phi(x +
      \mu_1)}{\phi(x - \mu_2) + \phi(x + \mu_2)} > 0,~\forall x > 0.
  \end{equation}
\end{proposition}
\begin{proof}
  We will repeatedly use the fact $(\diff/\diff\,x) \phi(x) = - x
  \phi(x)$ in the proof. By evaluating the differentiation in
  \eqref{eq:foldnorm-mlr}, it suffices to prove
  \[
    g(\mu) = \frac{- (x - \mu) \phi(x - \mu) - (x + \mu) \phi(x +
      \mu)}{\phi(x - \mu) + \phi(x + \mu)}
  \]
  is an increasing function of $\mu \ge 0$. Taking the derivative of
  $g(\mu)$, we have
  \[
    \begin{split}
      &[\phi(x - \mu) + \phi(x + \mu)]^2 \cdot \frac{\diff}{\diff \mu}
      g(\mu) \\
      =& \big\{[-(x - \mu)^2 + 1] \phi(x - \mu) + [(x + \mu)^2 - 1] \phi(x +
      \mu)\big\} \big[\phi(x - \mu) + \phi(x + \mu)\big] \\
      &- \big[- (x - \mu) \phi(x - \mu) - (x + \mu) \phi(x +
      \mu)\big] \big[ (x - \mu) \phi(x - \mu) - (x + \mu) \phi(x +
      \mu) \big] \\
      =& \phi(x - \mu)^2 - \phi(x + \mu)^2 + 4 \mu x \cdot \phi(x - \mu)
      \phi(x + \mu) > 0.
    \end{split}
  \]
\end{proof}

%%%%%%%%%%%%%%%%

\subsection{\Cref{thm:power}}
\label{sec:crefthm:power}

\begin{proof}
% We start by upper bounding $\mathcal S_{\tau}/\tau$. Write $\mathcal N_0$ for the set of true null $p$-values and recall that $|\mathcal N_0| = n_0$. Then
% \begin{equation}\label{eq:s_tau_size}
% |\mathcal S_{\tau}| \le n_1 + |\mathcal S_{\tau} \cap \mathcal N_0|.
% \end{equation}
% For each $i \in \mathcal N_0$, it always holds that $\mathrm{P}(i \in \mathcal S_{\tau}) \le \tau$. By assumption, there are at least $\tilde n_0$ many $p$-values that further obey $\mathrm{P}(i \in \mathcal S_{\tau}) \le \tau - \epsilon$ for some constant $\epsilon > 0$. Together with \eqref{eq:s_tau_size}, it follows that
% \[
% \mathrm{E} |\mathcal S_{\tau}| \le n_1 + (n_0 - \tilde n_0) \tau + \tilde n_0 (\tau - \epsilon),
% \]
% from which we see
% \[
% \limsup \frac{\mathrm{E} |\mathcal S_{\tau}|}{\tau n} \le \limsup \frac{n_1 + (n_0 - \tilde n_0) \tau + \tilde n_0 (\tau - \epsilon)}{\tau n} < 1.
% \]
Denote
\[
\limsup_{n \to \infty} \frac{\mathrm{E} [|\mathcal S_{\tau}|]}{\tau n} =
\limsup_{n \to \infty} \frac{1}{n} \sum_{i=1}^n F_i(\tau)/\tau = c.
\]
% Write $\limsup \frac{n_1 + (n_0 - \tilde n_0) \tau + \tilde n_0
% (\tau - \epsilon)}{\tau n} = 1 - c$ for some $c > 0$.
By applying the Chebyshev inequality and making use of the above
display, we get, for any $\epsilon > 0$
\[
\begin{aligned}
\limsup \mathrm{P}\left(\frac{\mathcal |S_{\tau}|}{\tau} > (c + \epsilon)n
\right) &= \limsup \mathrm{P}\left(\frac{\mathcal |S_{\tau}|}{\tau} - \frac{\mathrm{E}[\mathcal |S_{\tau}|]}{\tau} > (c + \epsilon) n - \frac{\mathrm{E}[\mathcal |S_{\tau}|]}{\tau}\right)\\
&\le \limsup \frac{\mathrm{E} \left[\frac{\mathcal |S_{\tau}|}{\tau} -
    \frac{\mathrm{E}[\mathcal |S_{\tau}|]}{\tau} \right]^2}{ \left[(c + \epsilon)n - \frac{\mathrm{E}[\mathcal |S_{\tau}|]}{\tau} \right]^2}\\
&\le \limsup \frac{n/4}{\left[(\epsilon +o(1))n\right]^2}\\
& = 0.
\end{aligned}
\]
% \qz{I think we can only bound the variance of $\frac{\mathcal
%     |S_{\tau}|}{\tau}$ by $n/4$ because some tests are
%   conservative. The proof should still go through.}.
This implies that
\[
p^{\mathrm{CB}} = \frac{\mathcal |S_{\tau}|}{\tau} \cdot \min_{1 \le i \le n}
p_i \le (c + o_p(1))\,n \min_{1 \le i \le n} p_i = (c +
o_p(1)) p^{\mathrm{B}}.
\]
The other side of the inequality can be proven similarly.
% yielding $\mathrm{P}(p^{CB} \le \alpha) \ge \mathrm{P}(n \min_{1 \le i \le n} p_i \le \alpha)$.

\end{proof}

\subsection{\Cref{thm:equi-corr}}
\label{sec:crefthm:equi-corr}

\begin{lemma}\label{lm:large_corr}
Let $\epsilon > 0$ be an arbitrary constant. Then Theorem
\ref{thm:equi-corr} holds for any correlation sequence
$\{\rho_l\}_{l=1}^{\infty}$ such that $\rho_l \ge \epsilon$ for all
$l$.
\end{lemma}

\begin{lemma}\label{lm:small_corr}
Theorem \ref{thm:equi-corr} holds for any correlation sequence $\{\rho_l\}_{l=1}^{\infty}$ such that $\rho_l \goto 0$.
\end{lemma}

Taking these two lemmas as given for the moment, a proof of Theorem \ref{thm:equi-corr} is readily given below.
\begin{proof}[Proof of Theorem \ref{thm:equi-corr}]
Let $\hat n_{\tau} = |\mathcal{S}_{\tau}|/\tau$. Suppose on the
contrary that Theorem is false. Then, we can pick a subsequence
$\{\rho_{s_1}, \rho_{s_2}, \ldots\}$ such that, restricted to this
subsequence,
\begin{equation}\label{eq:vio_fwer}
\mathrm{P}(\hat n_{\tau} \cdot p_{\min} \le \alpha) > (1 + c) \alpha
\end{equation}
for some constant $c > 0$.

Note that the sequence $\{\rho_{s_1}, \rho_{s_2}, \ldots\}$ must further contain a subsequence with each element bounded below by 0 or a subsequence with vanishing elements. In the former case, Lemma \ref{lm:large_corr} contradicts with \eqref{eq:vio_fwer}, and in the latter case, a contradiction arises between Lemma \ref{lm:small_corr} and \eqref{eq:vio_fwer}. Hence, such subsequence $\rho_{s_1}, \rho_{s_2}, \ldots$ should not exist at all, leading to the correctness of this theorem.

\end{proof}

\begin{proof}[Proof of Lemma \ref{lm:large_corr}]
Recognizing that the equi-correlations $\rho$ are positive, we start with the following representation
\[
Y_i \overset{d}{=} \sqrt{1 - \rho}X_i + \sqrt{\rho}W,
\]
where $X_1, \ldots, X_n, W$ are iid $\mathcal{N}(0, 1)$. Write $X_{\max} = \max\{X_1, \ldots, X_n\}$. Then
\begin{equation}\label{eq:pmin}
p_{\min} = \Phi(-\sqrt{1-\rho} X_{\max} - \sqrt{\rho}W).
\end{equation}
Making use the fact that $\Phi(-x) = (1 + o(1))\varphi(x)/x$ for $x \goto \infty$, from \eqref{eq:pmin} we get
\begin{equation}\label{eq:pmin2}
p_{\min} = (1 + o_p(1)) \frac1{\sqrt{1-\rho} X_{\max} + \sqrt{\rho}W} \varphi(\sqrt{1-\rho} X_{\max} + \sqrt{\rho}W),
\end{equation}
where the term $o_p(1)$ results from recognizing $\sqrt{1-\rho} X_{\max} + \sqrt{\rho}W \goto \infty$ as $n \goto \infty$ in probability. We proceed to bound $\varphi(\sqrt{1-\rho} X_{\max} + \sqrt{\rho}W)$.  Note that
\[
\begin{aligned}
\varphi(\sqrt{1-\rho} X_{\max} + \sqrt{\rho}W) &= \frac1{\sqrt{2\pi}} \exp\left[ -(\sqrt{1-\rho} X_{\max} + \sqrt{\rho}W
)^2/2\right]\\
&= \frac1{\sqrt{2\pi}} \e^{-I_1 - I_2 - I_3},\\
\end{aligned}
\]
where $I_1=(1-\rho) X^2_{\max}/2, I_2 = \rho W^2/2, I_3 = \sqrt{\rho(1-\rho)} X_{\max} W$. Using $X_{\max} = (1 + o_{p}(1))\sqrt{2\log n}$, we see the first term $I_1$ obeys
\[
I_1 = (1-\rho) X^2_{\max}/2 = (1-\rho) (1 + o_p(1)) \left(\sqrt{2\log n}\right)^2/2 \le (1 - \epsilon + o_p(1)) \log n.
\]
The second term satisfies $I_2 = \rho W^2 /2 = O_p(1) = o_p(I_1)$, and the last terms obeys
\[
I_3 = \sqrt{\rho(1-\rho)} X_{\max} W = O_p(\sqrt{2\log n}) = o_p(I_1).
\]
Taking these results together yields $I_1 + I_2 + I_3 = (1 + o_p(1)) I_1 \le (1 - \epsilon + o_p(1)) \log n$. Hence, we obtain
\[
\varphi(\sqrt{1-\rho} X_{\max} + \sqrt{\rho}W) \ge \frac1{\sqrt{2\pi}} \e^{-(1 - \epsilon + o_p(1)) \log n} = \frac1{\sqrt{2\pi} n^{1 - \epsilon + o_p(1)}}.
\]
Plugging the inequality above into the right-hand side of \eqref{eq:pmin2} gives
\begin{equation}\label{eq:pmin_final}
\begin{aligned}
p_{\min} &= (1 + o_p(1)) \frac{\varphi(\sqrt{1-\rho} X_{\max} + \sqrt{\rho}W)}{\sqrt{1-\rho} X_{\max} + \sqrt{\rho}W} \\
& \ge (1+o_p(1)) \frac1{\sqrt{2\pi} n^{1 - \epsilon + o_p(1)}\left[ \sqrt{1-\rho} X_{\max} + \sqrt{\rho}W \right]}\\
& = (1+o_p(1)) \frac1{2n^{1 - \epsilon + o_p(1)}\sqrt{\pi(1-\rho)\log n} }.
\end{aligned}
\end{equation}

Next, we move on to consider $\hat n_c = n_c/c$. Each $p$-value $p_i  = \Phi(-\sqrt{1-\rho} X_i - \sqrt{\rho}W)$ is below the cutoff $c$ if and only if
\[
X_i \ge - \frac{\Phi^{-1}(c) + \sqrt{\rho} W}{\sqrt{1-\rho}},
\]
which asserts
\begin{equation}\label{eq:conver}
\frac{n_c}{n} = (1 + o_p(1))\Phi\left( \frac{\Phi^{-1}(c) + \sqrt{\rho} W}{\sqrt{1-\rho}}\right).
\end{equation}

Combing \eqref{eq:pmin_final} and \eqref{eq:conver} yields
\[
\begin{aligned}
\hat n_c \cdot p_{\min} &=  (1+o_p(1))\frac{n\Phi\left( \frac{\Phi^{-1}(c) + \sqrt{\rho} W}{\sqrt{1-\rho}}\right) }{c} p_{\min}\\
&\ge  (1+o_p(1))\frac{n\Phi\left( \frac{\Phi^{-1}(c) + \sqrt{\rho} W}{\sqrt{1-\rho}}\right) }{c} \cdot \frac1{2n^{1 - \epsilon + o_p(1)}\sqrt{\pi(1-\rho)\log n} }\\
&=  (1+o_p(1))\frac{\Phi\left( \frac{\Phi^{-1}(c) + \sqrt{\rho} W}{\sqrt{1-\rho}}\right) }{2c\sqrt{\pi(1-\rho)} } \cdot \frac{n^{\epsilon + o_p(1)}}{\sqrt{\log n}}.\\
\end{aligned}
\]
Observe that the first term
\[
\frac{\Phi\left( \frac{\Phi^{-1}(c) + \sqrt{\rho}
      W}{\sqrt{1-\rho}}\right) }{2c\sqrt{\pi(1-\rho)} }
\]
is a positive random variable bounded away from 0 with high probability (though it depends on $n$), whereas the second term $n^{\epsilon + o_p(1)}/\sqrt{\log n}$ diverges to $\infty$ as $n \goto \infty$. This immediately implies
\[
\mathrm{P}(\hat n_c \cdot p_{\min} \le \alpha) \goto 0,
\]
which is stronger than what the lemma claims.
\end{proof}

%%%%%%%%%%%%%%%%%%%
%%%%%%%%%%%%%%%%%%%
%%%%%%%%%%%%%%%%%%%
\begin{proof}[Proof of Lemma \ref{lm:small_corr}]
We start by proving the fact that $\hat n_c = (1 + o_p(1))n$. First, we note that
\begin{equation}\label{eq:trivial_mean}
\mathrm{E} [\hat n_c] = n.
\end{equation}
Next, its variance is given as
\[
\begin{aligned}
\Var(\hat n_c) &= \frac{\Var(\sum_{i=1}^n \bm{1}(p_i \le c))}{c^2}\\
& = \frac{n\Var(\bm{1}(p_1 \le c)) + n(n-1) \Cov(\bm{1}(p_1 \le c), \bm{1}(p_2 \le c))}{c^2}\\
& \le \frac{n/4 + n(n-1) \Cov(\bm{1}(p_1 \le c), \bm{1}(p_2 \le c))}{c^2}.
\end{aligned}
\]
To proceed, use the fact that $\Cov(\bm{1}(p_1 \le c), \bm{1}(p_2 \le c)) = O(\rho)$. Then we get
\begin{equation}\label{eq:var}
\sqrt{\Var(\hat n_c)} =\sqrt{ n^2 O\left(\frac1{n} + \rho\right)} = o(n),
\end{equation}
which together with \eqref{eq:trivial_mean} gives
\[
\hat n_c = (1 + o_p(1))n.
\]
Hence, we get
\[
\begin{aligned}
\mathrm{P}(\hat n_c \cdot p_{\min} \le \alpha) &\le  \sum_{i=1}^n \mathrm{P}(\hat n_c \cdot p_i \le \alpha) \\
&=\sum_{i=1}^n \mathrm{P}((1 + o_p(1))n \cdot p_i \le \alpha) \\
&= \sum_{i=1}^n (1 + o(1))\frac{\alpha}{n}\\
&= (1+o(1))\alpha,
\end{aligned}
\]
as desired.

\end{proof}

%%% Local Variables:
%%% mode: latex
%%% TeX-master: t
%%% End: